\documentclass[conference]{IEEEtran}
\usepackage{cite}
\usepackage{amsmath,amssymb,amsfonts}
\usepackage[linesnumbered,lined,ruled,commentsnumbered]{algorithm2e}
\usepackage{graphicx}
\usepackage{textcomp}
\usepackage{xcolor}
\usepackage{hyperref}
\usepackage{amsthm}
\usepackage[noabbrev,capitalize]{cleveref}
\usepackage{graphicx}
\usepackage{caption}
\usepackage{subcaption}
\usepackage[colorinlistoftodos]{todonotes}
\usepackage{xspace}
\usepackage{amsthm}
\usepackage{paralist}
\usepackage{colortbl}
\usepackage{array,booktabs}
\usepackage{multirow}

\graphicspath{{./fig/}}

\newcommand{\pp}[1]{\medskip\noindent\textbf{\textit{#1}.}\xspace}

\newcommand{\ema}[1]{\ensuremath{#1}}

\newcommand{\rev}[1]{\color{black}#1\color{black}\xspace}


\newtheorem{problem}{Problem}
\newtheorem{theorem}{Theorem}

\newtheorem{lemma}{Lemma}

\newtheorem{remark}{Remark}


\newtheorem*{assumption*}{\assumptionnumber}
\providecommand{\assumptionnumber}{}
\makeatletter
\newenvironment{assumption}[1]
 {%
  \renewcommand{\assumptionnumber}{Assumption #1}%
  \begin{assumption*}%
  \protected@edef\@currentlabel{#1}%
 }
 {%
  \end{assumption*}
 }
\makeatother

\newcommand{\cosched}{\textsc{Co-Sched}\xspace}
\newcommand{\coalloc}{\textsc{Co-Alloc}\xspace}
\newcommand{\evalloc}{\textsc{Ev-Alloc}\xspace}
\newcommand{\feasible}{\textsc{Feasible}\xspace}
\newcommand{\order}{\textsc{Order}\xspace}

\newcommand{\ideal}{\ema{\textsc{Ideal}}\xspace}
\newcommand{\transit}{\ema{\textsc{In-transit}}\xspace}
\newcommand{\increasing}[1]{\ema{\textsc{Increasing-#1\%}}\xspace}
\newcommand{\decreasing}[1]{\ema{\textsc{Decreasing-#1\%}}\xspace}

\newcommand{\insitu}{in~situ\xspace}

\newcommand{\nsteps}{\ema{n_{steps}}\xspace}

\SetKwComment{Comment}{/* }{ */}
\SetKwInOut{Input}{Input}
\SetKwInOut{Output}{Output}
\SetKwProg{Fn}{Function}{:}{}
\SetKwFunction{Fcoalloc}{\coalloc}
\SetKwFunction{Ffeasible}{\feasible}
\SetKwFunction{Forder}{\order}

\newcommand{\naa}{\ema{\widetilde{N}}\xspace}

\begin{document}

\title{Co-scheduling Ensembles of In Situ Workflows}

\author{
    \IEEEauthorblockN{
        Tu Mai Anh Do\IEEEauthorrefmark{1},
        Lo\"ic Pottier\IEEEauthorrefmark{1},
        Rafael Ferreira da Silva\IEEEauthorrefmark{3},
        Frédéric Suter\IEEEauthorrefmark{3},
        Silvina~Ca\'ino-Lores\IEEEauthorrefmark{4},\\
        Michela Taufer\IEEEauthorrefmark{4},
        Ewa Deelman\IEEEauthorrefmark{1}
    }
    \IEEEauthorblockA{
        \IEEEauthorrefmark{1}Information Sciences Institute, University of Southern California, Marina Del Rey, CA, USA \\
        \IEEEauthorrefmark{3}Oak Ridge National Laboratory, Oak Ridge, TN, USA \\
        \IEEEauthorrefmark{4}University of Tennessee at Knoxville, Knoxville, TN, USA \\
        \{tudo,lpottier,deelman\}@isi.edu, \{silvarf,suterf\}@ornl.gov, 
        \{mtaufer,scainolo\}@utk.edu
    }
}


\maketitle

\begin{abstract}
Molecular dynamics (MD) simulations are widely used to study large-scale 
molecular systems. HPC systems are ideal platforms to run these studies, 
however, reaching the necessary simulation timescale to detect rare processes 
is challenging, even with modern supercomputers. To overcome the timescale 
limitation, the simulation of a long MD trajectory is replaced by multiple 
short-range simulations that are executed simultaneously in an ensemble of 
simulations. 
Analyses are usually 
co-scheduled with these simulations to efficiently process large volumes of data
generated by the simulations at runtime, thanks to in situ techniques. 
Executing a workflow ensemble of simulations
and their in situ analyses requires efficient co-scheduling strategies and
sophisticated management of computational resources so that they are not
slowing down each other.
In this paper, we propose an efficient method to co-schedule simulations and in situ
analyses such that the makespan of the workflow ensemble is minimized.
We present a novel approach to allocate resources for a workflow ensemble under resource constraints by using a 
theoretical framework modeling the workflow ensemble's execution. 
We evaluate the proposed approach using an accurate simulator based on the WRENCH 
simulation framework 
on various 
workflow ensemble configurations. 
Results demonstrate the significance of co-scheduling simulations 
and in situ analyses that couple data together to benefit from data locality, in which inefficient 
scheduling decisions can lead up to a factor 30 slowdown in makespan. 
\end{abstract}

\begin{IEEEkeywords}
workflow ensemble, in situ, co-scheduling, molecular dynamics, high-performance computing
\end{IEEEkeywords}

\section{Introduction}
\label{sec:intro} 

Molecular dynamics (MD) is one of the scientific simulations that benefit most from ever-increasing computational capabilities offered by modern high-performance computing (HPC) platforms. 
MD simulations serve as a productive
method to observe relevant processes of a molecular system at atomic
resolution. MD simulations discover the motion of the system through computing
atomic positions that evolve over time.
\rev{MD simulations regularly generate large amount of data that needs to be stored efficiently to disks.}
However, I/O capabilities have not evolved as fast as compute capabilities in contemporary HPC infrastructures, which causes an important performance bottleneck at scale~\cite{ian2017} due to large amount of data generated by the simulations that cannot be stored efficiently to disks. 
Consequently, the community has developed different approaches to overcome the I/O bottleneck---in this work, we focus on a particular one called \insitu. 
The \insitu paradigm shifts the data processing phase from traditional post-processing to iterative processing following a producer-consumer approach. Basically, instead of storing data produced by simulations on disks, simulations send data periodically using faster memories (e.g., DRAM, SSD) to \insitu analysis jobs that run concurrently to the simulations. Performing analyses \insitu does not only help to obtain on-the-fly insights into the molecular system at runtime, but it also reduces the burden on the file system.

Interesting molecular events require observing at a long enough simulation timescale, which is
compute-intensive for simulating large-scale molecular systems.
To overcome this timescale problem, a family of enhanced-sampling methods
utilizes multiple short-range simulations running within \emph{ensembles}
to efficiently explore different regions of the conformational space.
Ensemble-based computational methods~\cite{paolo2006ercfe,jeffrey2014mrsfe,okamoto2004geaes,riccardo2012sges} are gaining popularity in many
scientific domains using computational simulations, where computations are executed as jobs into a \emph{workflow ensemble}. 
A workflow ensemble~\cite{malawski2015ensemble} is usually composed of a number of inter-related 
workflows, for example identical workflows of simulation-analysis pipeline with different input parameters 
exploring the solutions space.
We consider the 
execution of ensembles of \insitu workflows in which concurrent jobs, i.e. simulations and analyses, are 
tightly coupled using \insitu processing, thereby usually co-scheduled together to efficiently process data at runtime.
However, sustaining complex co-scheduling for massive workflow ensembles of concurrent
applications will require sophisticated orchestration tailored 
for the management of workflow ensembles~\cite{ahn2020flux,peterson2019enabling}. 
 
In this paper, we 
aim to guide the above 
emerging orchestration runtime. The key challenge when orchestrating the
workflow ensembles at scale is to efficiently schedule the simulations and \insitu 
analyses as they continuously exchange data. 
The intersection of workflow ensembles and \insitu is particularly challenging as 
\insitu itself has intricate, complex communication patterns, which is further 
convoluted by the number of concurrent executions at scale.
%
Intuitively, co-scheduling the analyses with their associated simulation on the same computing resources improves data locality, but required resources may not be 
consistently available.
Thus, upon resource constraint, how do we co-schedule workflow ensembles on 
a HPC machine?
A key distinguishing factor is to efficiently generate large ensembles to optimize scientific discoveries. 
Another major challenge is to determine resource requirements such 
that the resulting performance of executing them together in the workflow 
ensemble is maximized. Therefore, we design 
efficient co-scheduling strategies and resource assignments
for the workflow ensemble of simulations and \insitu analyses. 
Our contributions are as follows:
\begin{itemize}
    \item We develop a mathematical model that characterizes iterative execution and data coupling behavior of simulations and \insitu analyses in a workflow ensemble.
    \item We use this model to derive optimal co-scheduling strategies and compute resource allocations for the simulations and \insitu analyses in the workflow ensemble under constraints of the available computing resources.
    \item We evaluate our proposed scheduling solutions using an accurate simulator by studying different co-scheduling scenarios on various configurations of the workflow ensembles. 
\end{itemize}

\section{Related Work}
\label{sec:related}

Designing scheduling algorithms for \insitu workflows such that available resources are efficiently utilized is challenging as optimizing individual workflow components does not ensure \rev{that} the end-to-end performance of the workflow is optimal.
The \rev{larger the} number of components the workflow has, the larger the multi-parametric space needed to explore.
Specifically, workflow ensembles consist of many simulations and \insitu analyses running concurrently, thus the combination of their data coupling behaviors significantly increases the complexity of the scheduling decision\rev{s}. 
Few efforts have been attempted to solve this problem on a single \insitu workflow.
Malakar~{\it et al.} formulated the \insitu coupling between the simulation and the analysis as a mixed-integer linear programming problem~\cite{malakar2016oec} and derived the frequency to schedule the analysis executed \insitu on different set of nodes.
Aupy~{\it et al.} proposed a greedy algorithm~\cite{aupy2019mha} to schedule a set of \insitu analyses on available resources with memory constraints.
Taufer~{\it et al.} introduced a 2-step model~\cite{taufer2019cis} to predict the frequency of frames to be analyzed \insitu such that computing resources are not underutilized while the simulation is not slowed down by the analysis.
Our proposed approach for scheduling is not restricted to a single \insitu coupling. 

Co-scheduling strategies have been recently proposed 
to deliver better resource utilization and increasing overall application throughput~\cite{breibart2017dcd,zacarias2019icw,kuchumov2021ane}. 
However, only few works incorporated co-scheduling into the \insitu workflows, where components (simulations and analyses) in the workflow are tightly coupled together.
Sewell~{\it et al.} proposed a thorough performance study~\cite{sewell2015lca} for co-scheduling \insitu components that leverages memory system for data staging with low latencies.
Aupy~{\it et al.} introduced an optimization-based approach~\cite{aupy2018chw} to determine the optimal fraction of caches and cores allocated to \insitu tasks.
Due to the lack of capability to submit a batch of concurrent applications with conventional scheduler, co-scheduling was not fully-supported for the workflow ensembles at production scale.
To overcome this limitation, Flux~\cite{ahn2020flux} was designed as a hierarchical resource manager to meet the needs of co-scheduling large-scale ensemble of various simulations and \insitu analyses.
To the best of our knowledge, this paper is the first effort that addresses co-scheduling problem for \insitu workflows at ensemble level. 
\section{Model}
\label{sec:model}

The goal of this paper is to study the scheduling of complex ensembles of \insitu workflows~\cite{do2022pae} on parallel machines, where every workflow within an ensemble has several concurrent jobs coupled together using \insitu processing, either be a \emph{simulation} or an \emph{analysis}.
Unfortunately, one might not have enough compute nodes to run concurrently all simulations and analyses and, will have to make scheduling choices: (\emph{i}) which simulation and analysis should run together on the same resources and (\emph{ii}) how much resources should be allocated to each.

\pp{Our approach}
The core of this work is to propose a powerful but versatile model to express complex data dependencies between simulations and \insitu analyses in a workflow ensemble, and based on that design appropriate scheduling strategies. The chosen approach is to model \rev{the scheduling problem under simplified conditions  where we make several assumptions to reduce} the large space of exploration in an ensemble of \insitu workflows. Then, we demonstrate \rev{that} solutions designed for the simplified world, also perform well \rev{in} using realistic settings.

\subsection{Couplings}
A simulation $S_i$ represents a job in an ensemble that simulates a state of interest in the molecular system.
An analysis, denoted as $A_j$, represents a computational kernel that is suitable to analyze \insitu data produced by a simulation, e.g. transforming high-dimensional molecular structures to eigenvalue metadata~\cite{johnston2017isda}.
Regularly, the output resulted by \insitu analyses are later post-processed, e.g. synthesized to infer conformational changes of the molecular system in a post-hoc analysis.
However, in the scope of this paper, we focus on the part of the simulation-analysis pipeline where they are executed simultaneously in an ensemble as it is more challenging to manage. 
We denote the set of simulations and analyses in an ensemble as $S$ and $A$ respectively.
In this work, each workflow of an ensemble is composed of one simulation and at least one analysis.
The analyses are executed in parallel with the simulation and periodically processed simulation data generated after each simulation steps in an iterative fashion. This essence of \insitu processing expresses the important notion of \emph{coupling}.

Coupling indicates \rev{which simulation and analyses have} a data dependency, i.e. data produced by the simulation is analyzed by the corresponding analyses.
Scientists usually decides beforehand which analyses have to process the data from a given simulation, we call this mapping a \emph{coupling} (see~\cref{fig:mapping}). Therefore, couplings are decided beforehand as an input of the problem.
More formally, 
a coupling $p : S \rightarrow A$ defines which analyses in $A$ are coupled with which simulation in $S$, then for every $S_i \in S$, let $p(S_i)$ be the set of analyses that couple data with the simulation $S_i$. 
For example, in~\cref{fig:mapping}, we have $p(S_1) = \{A_1, A_3\}$.
Now that we have defined the notion of couplings, we need to define different notions around \emph{co-scheduling}.

%

%
%


\subsection{Co-scheduling}

The notion of co-scheduling is at the core of this work, \emph{co-scheduling} is defined as running multiple applications concurrently on the same set of compute nodes, each application using a fraction of the number of cores per node.
We assume the total number of cores of each application is distributed evenly among the compute nodes the application is co-scheduled on, which is similar to the manner of distributing resources in emerging schedulers, e.g. Flux~\cite{ahn2020flux}, for co-scheduling concurrent jobs of a massive ensemble in next-generation computers.
In this paper, we distinguish two scenarios in which we map the simulations and analyses to computational resources:
\begin{compactitem}
	\item \emph{Co-scheduling}: $S_i$ and $A_j$ run on the same compute nodes and thus have access to the shared memory;
	\item \emph{In transit}: $S_i$ and $A_j$ run on two different set of nodes and sending data from $S_i$ to $A_j$ involves network communications.
\end{compactitem}
\rev{In an attempt to avoid performance degradation in co-scheduling, in this work, we consider that simulations, which are compute-intensive, are not co-scheduled together~\cite{dauwe2014mep}}. 
Now we need to define several other important terms related to co-scheduling.

\begin{figure}[!ht]
	\centering
	\includegraphics[width=\linewidth]{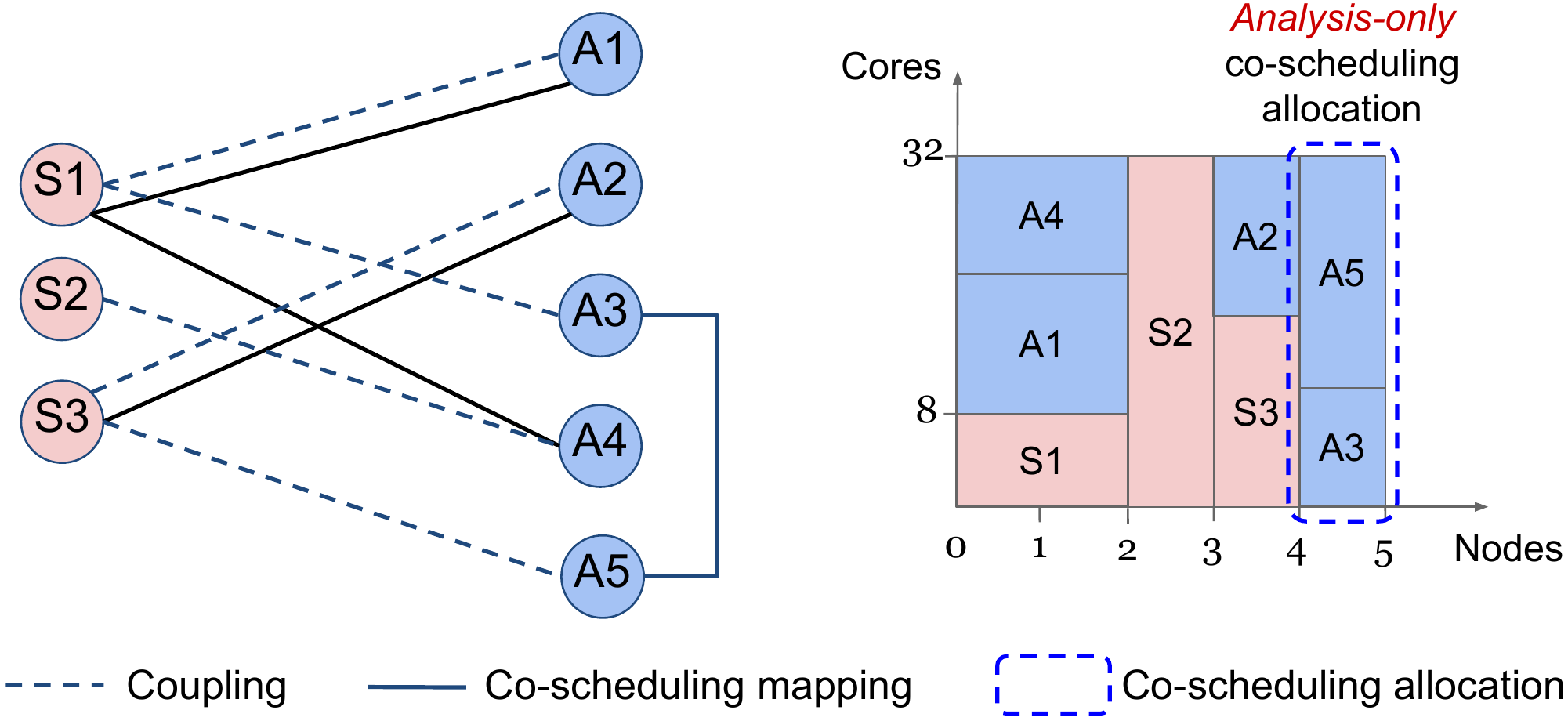}
	\caption{Illustration of co-scheduling and coupling notions.}
	\label{fig:mapping}
\end{figure}

\pp{Co-scheduling mapping} 
A \emph{co-scheduling mapping} $m$ defines how $S$ and $A$ are co-scheduled together.
For a given mapping $m$, we denote by 
$m(S_i)$ the set of analyses co-scheduled with $S_{i}$. 
For example, with the mapping illustrated in~\cref{fig:mapping}, $m(S_1) = \{ A_1, A_3\}$ and $m(S_3) = \{ A_2\}$.
\begin{assumption}{1}
In this model, we ignore co-scheduling interference, such as cache sharing, that could degrade the performance when multiple applications share resources~\cite{dauwe2014mep}.
\end{assumption}

\pp{Co-scheduling allocation} 
A co-scheduling allocation represents a set of applications which are co-scheduled together (i.e., share computing resources). 
Let $\langle \{S_i, A_j\} \rangle$ denote a co-scheduling allocation where simulation $S_i$ is co-scheduled with analysis $A_j$. Recall that, we have at most one simulation per allocation, but we can have multiple analyses.
Formally, for a given co-scheduling mapping $m$, a simulation $S_i$ and analyses $m(S_i)$ that are co-scheduled with $S_i$, the corresponding co-scheduling allocation is denoted by $\langle S_i \cup m(S_i) \rangle$.
However, not all analyses are required to be co-scheduled with a simulation, there exists co-scheduling allocations in which only analyses are co-scheduled together, called \emph{analysis-only co-scheduling allocations}.
For example, in~\cref{fig:mapping}, $\langle \{A_3, A_5\} \rangle$ is an analysis-only co-scheduling allocation.
A co-scheduling allocation also determines the amount of computer resources assigned for each co-scheduled application. 
Let $n_{x}$ denote the number of nodes assigned to application $x$. 
For instance, if $S_i$ and $A_j$ are co-scheduled on the same co-scheduling allocation, 
then $n_{S_i} = n_{A_j}$.
Let $c_{x}$ be the number of cores per node assigned to $x$.
As each co-scheduled application takes a portion of cores in the total of $C$ cores a node has, then for every co-scheduling allocation $\langle X \rangle$, we have
%
	$\sum_{x \in X} c_{x} \leq C$ .
%
In~\cref{fig:mapping}, since $A_1$ and $A_4$ are co-scheduled with $S_1$ on a co-scheduling allocation spreading in 2 compute nodes, then $n_{S_1} = n_{A_1} = n_{A_4} = 2$.
And since $c_{S_1} = 8$, then $S_1$ occupies 8 cores each node, so 16 cores in total.


\subsection{Execution platform}
We consider a traditional parallel computing platform with $N$ identical compute nodes, each node has $C$ identical cores, a shared memory of size $M$ and a bandwidth of $B$ per node. 
In addition, we consider that the interconnect network between nodes is a fully-connected topology, hence the communication time is the same for any pairs of nodes.

\subsection{Application model}

Because of the \insitu execution, simulations and analyses are
executed iteratively, in which iterations exhibit consistent
behavior~\cite{do2021lightweight}.
Specifically, the simulation executes certain
compute-intensive tasks each iteration and then generates output data. The analysis,
in every iteration, subscribes the data produced by the simulation and
performs several specific \rev{computations}. Since the complexity of computation
and the amount of data are identical across iterations, the time to
execute one iteration is approximately the same across 
iterations. Based on this reasoning 
and, by assuming the simulations and analyses have the same number 
of iterations ($\nsteps$), we simply consider the execution time of a single 
iteration as the entire execution time can be derived from 
multiplying the time taken by a single iteration by $\nsteps$~\cite{do2021lightweight}.

\pp{Computation model}
Let us now denote by $t(x)$ the time to execute one iteration of the iterative application $x$ ($x$ can either be a simulation or an analysis). 
For readability, we pay particular attention to the class of perfectly parallel applications, even though the approach can be generalized to other models, such as Amdahl's Law~\cite{amdahl1967vspa}.  
\begin{assumption}{2}
    Every simulation and analysis follow a perfectly parallel speedup model~\cite{herlihy2012}.
\end{assumption}
In other words, a job $x$ that runs in time $t_x(1)$ on one core will run in time $t_x(1) / c $ when using
$c$ cores.
Thus, we can express the time to execute one iteration of the simulation as follows
\begin{equation}
    \label{eq:ts}
    t(S_{i}) = \frac{t_{S_{i}}(1)}{n_{S_{i}} c_{S_{i}}},
\end{equation}
where $S_{i}$ is running on $n_{S_{i}}$ compute nodes with $c_{S_{i}}$ cores each node. 

\pp{Communication model}
%
In \insitu processing, data is usually stored on node-local storage~\cite{do2018enabling} or local memory~\cite{kim2015pnm} to reduce the overhead associated with data movement. 
In this work, we process data residing in memory to allow near real-time data analysis. 
Data communications are modeled as
follows: (\emph{i}) $S_i$ writes its results into local
memory of the nodes it runs on, 
and (\emph{ii}) each analysis coupled with $S_i$ reads from that
memory the data it needs whether locally or remotely depends on $A_j$ is co-scheduled with $S_i$ or not.

%
\begin{assumption}{3}
\label{assum:comm}
Intra-node communications (i.e., shared memory) are considered negligible compared to inter-node communications (i.e., distributed memory).
\end{assumption}
In this model, we consider that writing/reading to/from the local memory of a node (e.g., the RAM) is negligible, 
therefore the cost of communicating data between simulation and analysis co-scheduled together is approximately close to zero. However, communications going over the
interconnect have to be modeled. 
For example, in~\cref{fig:mapping}, $A_3$, $A_4$ and $A_5$ are not co-scheduled with the simulations they couple with, hence they incur overhead of remote transfers.
To model these 
inter-node communications, we adopt a classic linear bandwidth 
model~\cite{williams2009} where it takes $\frac{S}{B}$ to 
communicate a message of size $S$, where $B$ is the bandwidth. 
%
Let $V_{A_{j}}$ be the amount of data received and processed by $A_j$.
Remember that $A_j$ runs on $n_{A_j}$ compute nodes with $c_{A_j}$ core per node.
Then, based on the linear bandwidth model, the I/O time of $A_j$ can be expressed as 
%
     $V_{A_{j}} / B \, n_{A_{j}}$,
where $B$ is the maximum bandwidth of one compute node.
Even though the bandwidth is actually variable as it is shared by multiple I/O operations during the execution, we use the maximum bandwidth of a compute node for an optimistic consideration.

\pp{Execution time}
Finally, we combine
computation and communication model to get actual time to execute one iteration:
\begin{equation}
    \label{eq:ta}
    t(A_{j}) = 
    \begin{cases}
        \dfrac{t_{A_{j}}(1)}{n_{A_{j}} c_{A_{j}}} & \text{if $A_j$ is co-scheduled with }\\[-10pt]
        & \text{the simulation it couples with;} \\
        \dfrac{t_{A_{j}}(1)}{n_{A_{j}} c_{A_{j}}} + \dfrac{V_{A_{j}}}{B \, n_{A_{j}}}& \text{otherwise;}
    \end{cases}       
\end{equation}
while $t(S_i)$ is simply~\cref{eq:ts} as communications to local memory are free (cf. Assumption~\ref{assum:comm}).
~\cref{table:notations} summarizes the notation introduced in this section.

\begin{table}[!ht]
    \centering
    {
        \setlength{\tabcolsep}{3pt}
        \begin{tabular}{ll}
            \toprule
            Notation & Description \\
            \midrule
            $S_i, A_j$ & Set of simulations and analyses ($1 \leq i,j \leq n$)  \\
            $N, C$ & Total number of nodes and number of cores per node\\
            $B$ & Maximum bandwidth per node \\
            $p(S_i)$ & Set of analyses that couple data with simulation $S_i$ \\
            $V_{A_j}$ & Amount of data received by analysis $A_j$ \\
            $m(S_i)$ & Set of analyses co-scheduled with simulation $S_i$ \\
            $\langle \{S_i, A_j\} \rangle$ & Co-scheduling allocation where $S_i$ and $A_j$ are co-scheduled\\
            $n_x, c_x$ & Number of nodes and cores/node assigned to $x$ \\
            $t_x(1)$ & Time to execute one iteration of $x$ on one core \\
            $t(x)$ & Actual time to execute one iteration of $x$ on $n_x, c_x$\\
            $P^{NC}$ & Set of analyses co-scheduled on analysis-only allocations \\
            \bottomrule
        \end{tabular}
    }
    \caption{Notations.}
    \label{table:notations}
\end{table}

\subsection{Makespan}
\label{subsec:makespan}
We have defined the execution time for the simulation and the analysis and, armed with the notion of \emph{co-scheduling mapping} previously discussed, we are ready to define the makespan of an ensemble of \insitu workflows. 
Let $\nsteps$ denote the number of iterations, recall that we assume the simulations and analyses in an ensemble have the same number of iterations. 
The makespan can now be expressed as follows:
\begin{align}
    \label{eq:makespan}
    \textsc{Makespan} &= 
    \max_{S_i \in S, A_j \in A} \; (t(S_i), t(A_j)) \times \nsteps 
\end{align}


We are now ready to define different scheduling problems tackled in this work such that this makespan is minimized.

\section{Scheduling Problems}
\label{sec:coscheduling}

We consider that simulations and analysis are moldable jobs~\cite{feitelson1997} (i.e., \rev{resources allocated to a job are determined before its execution starts}). In addition, all the results presented in this section are using rational amount of resources, i.e. number of nodes and cores are rational. Our approach is to design optimal, but rational solutions and then adapt these results to a more realistic setup where resources are integer, by using rounding heuristics discussed in~\cref{sec:allocation}.
Rounding heuristic is a well-known practice commonly used in resource scheduling~\cite{stillwell2010raa,aupy2019mha} to round to integer values the rational values assigned to 
integer variables, which are number of nodes and cores in our case.

\subsection{Problems}
We consider two problems in this research, the first problem, \cosched, consists into finding a mapping that minimized the makespan of an ensemble of workflows given that we have a resource allocation, so for each simulation and analysis in the workflow we know on how many nodes/cores they are running. The second problem, \coalloc, is the reverse problem, from a given mapping we aim to compute an optimal resource allocation scheme.
\begin{problem}[\cosched]
Given a number nodes and cores assigned to simulations and analysis, find a co-scheduling mapping $m^{*}$ minimizing the makespan of the entire ensemble.
\end{problem}
%
%
\begin{problem}[\coalloc]
Given a co-scheduling mapping $m$, compute the amount of resources $n^{*}_{S_i},c^{*}_{S_i}$ assigned for each simulation $S_i \in S$ and $n^{*}_{A_j}, c^{*}_{A_j}$ for each analysis $A_j \in A$ such that the makespan of the entire ensemble is minimized.
\end{problem}

\pp{Ideal scenario}
To find a co-scheduling mapping which minimizes the makespan for \cosched, we start with
a \emph{ideal} co-scheduling mapping which is defined as a mapping where all analyses are co-scheduled with their coupled simulation. This scenario is denoted as \emph{ideal} as it implies that you have enough compute nodes to accommodate such mapping.
We show that the makespan is minimized under an ideal co-scheduling mapping by verifying the correctness of the following theorem:
\begin{theorem}
\label{tr:gm}
The makespan is minimized if, and only if, each analysis is co-scheduled with its coupled simulation.
\end{theorem}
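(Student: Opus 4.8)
The plan is to isolate the only channel through which the co-scheduling mapping $m$ affects the per-iteration times. Inspecting \cref{eq:ts,eq:ta}, the value $t(S_i)$ depends solely on the resources $n_{S_i}, c_{S_i}$ allotted to $S_i$ and is completely independent of $m$ (a simulation never pays a transfer cost, by Assumption~\ref{assum:comm}), whereas for fixed $n_{A_j}, c_{A_j}$ the value $t(A_j)$ takes one of exactly two values: the co-scheduled branch, or that same branch plus the strictly positive penalty $V_{A_j} / (B\, n_{A_j})$ incurred when $A_j$ is \emph{not} co-scheduled with the simulation it couples with. So the theorem is really a statement about this penalty term: the makespan in \cref{eq:makespan} is \nsteps times a maximum of quantities, each of which is either $m$-invariant (the simulation terms) or individually minimized precisely by co-scheduling $A_j$ with its coupled simulation (the analysis terms).

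For the ``if'' direction I would give a term-by-term comparison. Let $m^{\star}$ be the ideal mapping and let $m$ be any other mapping using the same allocation. Every simulation contributes the same value under $m$ and $m^{\star}$; every analysis contributes under $m^{\star}$ the smaller of the two branches of \cref{eq:ta}, hence no more than it contributes under $m$. Since the makespan is a maximum over exactly these quantities, $m^{\star}$ dominates $m$ inside that maximum and therefore achieves a makespan at most that of $m$, which already shows the ideal mapping is a minimizer.

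For the ``only if'' direction I would argue the contrapositive by an exchange argument: assuming some $A_j \in p(S_i)$ is not co-scheduled with $S_i$, I would move $A_j$ onto $S_i$'s allocation, which removes its penalty term, and then re-assign the resources freed on $A_j$'s former (in-transit or analysis-only) allocation to whichever job currently realizes the maximum in \cref{eq:makespan}; by Assumption~2 (perfectly parallel speedup), adding cores/nodes to that job strictly lowers its per-iteration time, so the makespan strictly decreases. Iterating this exchange until no analysis violates the co-scheduling condition yields the ideal mapping and shows that any non-ideal mapping is strictly beaten.

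The delicate point, and the main obstacle I anticipate, is this last step in the subcase where the misplaced analysis $A_j$ is not on the critical path: deleting only its communication penalty then leaves the makespan unchanged, so the proof must genuinely exploit the nodes vacated by $A_j$ to shrink the current bottleneck, which is where the resource-allocation side of the model (and its connection to \coalloc) comes into play. Two feasibility checks must also be discharged along the way: that placing $A_j$ on $S_i$'s nodes respects the per-node core budget $\sum_{x \in X} c_x \le C$ (it reuses cores already budgeted to $A_j$, so no new cores are requested), and that an emptied analysis-only allocation can simply be dropped without affecting any other job.
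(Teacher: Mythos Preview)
Your plan is different from the paper's, and the ``if'' direction is essentially fine, but the ``only if'' exchange has a real gap tied to a feasibility error you yourself flag but then resolve incorrectly.

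The problem is your claim that moving $A_j$ onto $S_i$'s nodes ``reuses cores already budgeted to $A_j$, so no new cores are requested.'' Those cores live on $A_j$'s \emph{former} nodes, not on $S_i$'s; the per-node budget $\sum_{x\in X} c_x\le C$ is a constraint on $S_i$'s allocation, which may already be saturated. To make room for $A_j$ there you must \emph{add nodes} to $S_i$'s allocation, and the only source of those nodes is precisely the resources you proposed to hand to the bottleneck. So in the off-critical-path subcase you identified, the ``freed resources'' are consumed by the move itself and there is nothing left to shrink the maximum. The argument is salvageable (merge the freed node-mass into $S_i$'s group, then exploit $A_j$'s new slack---its penalty is gone---to shave cores from $A_j$ and convert those into fractional freed nodes for the bottleneck), but that second round of redistribution is the actual content of the proof and is missing from your proposal. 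A similar feasibility wrinkle lurks in your ``if'' direction: a profile $(n_x,c_x)$ feasible under $m$ need not be feasible under $m^\star$, since co-scheduled jobs must share the same node count; you would need to re-pack each ideal group onto $\sum_x n_xc_x/C$ nodes first.

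The paper sidesteps all this packing bookkeeping by comparing only the \emph{optimally allocated} makespans of two mappings that differ in a single analysis $A_k$, and invoking \cref{rm:tx}: at an optimum every $t(x)$ equals the common makespan. Then, assuming for contradiction that decoupling $A_k$ does not increase the optimum, $A_k$ must absorb strictly more total cores $n'_{A_k}c'_{A_k}>n_{A_k}c_{A_k}$ to pay its new penalty at the same (or smaller) time; since total cores are capped at $NC$, some other job $I$ loses cores, forcing $t'(I)>t(I)=T\ge T'$, a contradiction. Equalization makes every job simultaneously the bottleneck, which is exactly why the paper never has to handle your off-critical-path case.
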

\begin{proof}
The proof is given in~\cref{apdx:gm}.
\end{proof}
Based on~\cref{tr:gm}, we have a solution for~\cosched but under the strict assumption that we have enough resources to achieve such ideal mapping (i.e., co-schedule all analyses with their respective coupled simulations).
However, due to resource constraints, for instance, there is insufficient memory to accommodate all the analyses and their coupled simulation, or bandwidth congestion as multiple analyses performs I/O at the same time, that ideal co-scheduling mapping might be unachievable, and some analyses are not able to co-schedule with their coupled simulation.


\pp{Constrained resources scenario}
We therefore have to consider co-scheduling mappings in which there are analyses that are not co-scheduled with their coupled simulation.
There are numerous such co-scheduling mappings to explore, indeed each analysis that are not co-scheduled with its coupled simulation could potentially be co-scheduled with any other applications.
To reduce number of considered co-scheduling mappings, we show that the makespan is minimized if, and only if, analyses that are not co-scheduled with their coupled simulations are co-scheduled inside analysis-only co-scheduling allocations, i.e. without the presence of any simulation.
Based on this observation, we build a partition of analyses which are co-scheduled with their respective simulations and co-scheduled with other analyses (see~\cref{fig:mapping}).


\begin{theorem}
\label{tr:da}
Given a set of analyses that are not co-scheduled with their coupled simulation, the makespan of the ensemble is minimized when these analyses are co-scheduled within analysis-only co-scheduling allocations.
\end{theorem}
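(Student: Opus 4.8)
The plan is to establish the statement by a local exchange argument. I would fix a schedule $\sigma$ that attains the minimum makespan among all schedules in which every analysis of the prescribed set $P^{NC}$ is \emph{not} co-scheduled with the simulation it is coupled with, and then rewrite $\sigma$ --- never increasing the makespan --- until each such analysis sits in an analysis-only co-scheduling allocation. Since the makespan is monotone non-increasing along the rewriting, the final schedule is still optimal and has the claimed structure, which is exactly the statement.

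The driving observation is that a ``foreign'' analysis $A_j\in P^{NC}$ gains nothing from sharing an allocation with a simulation $S_k$. Since each analysis is coupled with a single simulation and, by hypothesis, $S_k$ is not that simulation, $A_j$ still reads its input over the interconnect, so by the second case of \cref{eq:ta} its per-iteration time is $t(A_j)=t_{A_j}(1)/(n_{A_j}c_{A_j})+V_{A_j}/(B\,n_{A_j})$ --- precisely the value it would take in an analysis-only allocation with the same node count $n_{A_j}$ and core share $c_{A_j}$. Inside the mixed allocation $\langle X\rangle$, on the other hand, $A_j$ consumes $c_{A_j}$ cores per node which, by the per-allocation constraint $\sum_{x\in X}c_x\le C$, are withheld from $S_k$ and from the analyses genuinely coupled with $S_k$.

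Concretely, whenever $\sigma$ contains an allocation $\langle X\rangle$ holding a simulation $S_k$ together with some $A_j\in P^{NC}$, I would detach $A_j$ and give it its own analysis-only allocation that keeps $A_j$'s node count $n_{A_j}$ and core share $c_{A_j}$; the cores released inside $\langle X\rangle$ can be handed to $S_k$ as far as the core budget allows, or simply left idle. Then $t(A_j)$ is unchanged, $t(S_k)$ does not increase (by \cref{eq:ts} and Assumption~2), every other application is untouched, and all resource constraints still hold because the move only un-merges one application from an already feasible allocation. Hence the makespan does not increase, while the number of analyses of $P^{NC}$ still sharing an allocation with a simulation strictly drops; after finitely many such moves it is zero, and we are done.

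The part I expect to be the main obstacle is making the detaching step feasible. One must check that the freed analysis can be re-placed on the same footprint without violating the per-node core budget (it can, since the total core share on those nodes is unchanged) and without breaking the platform's accounting of nodes and memory across allocations; the subtlety is that the $V_{A_j}/(B\,n_{A_j})$ term forbids shrinking $n_{A_j}$, so the reassignment cannot ``make room'' by taking nodes away from $A_j$ itself. Everything else --- invariance of $t(A_j)$ under the move and monotonicity of $t(S_k)$ in its core count --- follows directly from \cref{eq:ts}, \cref{eq:ta} and the perfectly-parallel model of Assumption~2. I would also remark that merging the detached analyses into shared analysis-only allocations can only help further (more nodes per analysis lowers both the compute and the communication term), although this refinement is not needed for the statement.
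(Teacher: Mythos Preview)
Your exchange step is where the argument breaks. In this model, ``co-scheduled'' means ``running on the same set of nodes'': an allocation is a set of applications that share a block of nodes, each taking a slice of the $C$ cores on every node of that block. So there are only two ways to ``detach'' $A_j$ from the allocation $\langle X\rangle$ containing $S_k$:

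\begin{itemize}
  \item \emph{Keep $A_j$ on the same nodes.} Then $A_j$ is still sharing nodes with $S_k$, hence still co-scheduled with $S_k$, and the allocation structure has not changed. No cores are ``released inside $\langle X\rangle$'' because $A_j$ is still occupying them. Your move is a relabelling, not a placement change, and the number of analyses of $P^{NC}$ sharing an allocation with a simulation has not decreased.
  \item \emph{Move $A_j$ to a disjoint set of nodes.} Now you need $n_{A_j}$ fresh nodes for $\langle A_j\rangle$, while $\langle X\setminus\{A_j\}\rangle$ still occupies its original $n_{S_k}=n_{A_j}$ nodes. The platform has only $N$ nodes, already fully partitioned among allocations, so these fresh nodes must come from somewhere---either from $\langle X\rangle$ itself or from another allocation. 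Taking them from $\langle X\rangle$ shrinks $n_{S_k}$ and raises $t(S_k)$; taking them from another allocation raises that allocation's time. You yourself note that the $V_{A_j}/(B\,n_{A_j})$ term forbids shrinking $n_{A_j}$, so you cannot make room on $A_j$'s side either. In the optimal schedule all per-iteration times are equal (Remark~\ref{rm:tx}), so whichever time goes up becomes the new makespan.
\end{itemize}

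Either way the step fails: the first reading is vacuous, the second is infeasible without increasing the makespan. The sentence ``the total core share on those nodes is unchanged'' confirms you had the first reading in mind, which does not produce an analysis-only allocation.

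The paper avoids this trap by \emph{not} doing a local exchange. It compares the two mappings globally: for $m^*$ (all of $P^{NC}$ in analysis-only allocations) and for an arbitrary competing $m'$ (some of $P^{NC}$ co-scheduled with foreign simulations), it uses Lemma~\ref{tr:to} and Remark~\ref{rm:tx} to write down the optimal per-iteration time of each mapping in closed form, obtaining $t^*=\bigl(BQ(S\cup A)+U^*(P^{NC})\bigr)/(NBC)$ and the analogous $t'$, and then shows algebraically that $U'(P^{NC})>U^*(P^{NC})$, hence $t'>t^*$. The point is that moving a foreign analysis into a simulation's allocation forces that allocation to absorb an extra $c_{A_k}V_{A_k}$ communication-cost term (the $U(P_i)$ in Lemma~\ref{tr:to}) without any offsetting gain, and this effect can only be seen once the node budget is re-optimised across the whole ensemble---not by freezing a schedule and swapping one piece.
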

%
\begin{proof}
The proof is given in~\cref{apdx:da}.
\end{proof}
%
Note that, since analyses co-scheduled on analysis-only allocations have no data dependency among each other (they only communicate with their simulations), adjusting co-scheduling placements among them has no impact on the makespan. 
\subsection{Resource Allocation for \coalloc}
\label{sec:allocation}








\subsubsection{Rational Allocation}
We first show how to compute the (rational) numbers of nodes and cores assigned to each simulation and analysis in the workflow ensemble. In other words, we demonstrate how to solve \coalloc.
\rev{The idea is to allocate resources to co-scheduled applications such that differences among execution time of every co-scheduling allocations are minimized, thereby leading to minimal makespan. The intuition is that if one allocation has a smaller execution time than another allocation, we can take resources (thanks to rational number of nodes and cores) from the faster allocation to accelerate the slower one until all allocations finish approximately at the same time, hence improving the overall makespan.}

\pp{Analysis-only allocations}
\rev{Based on the above reasoning, we first compute resource allocation for analysis-only co-scheduling allocations (c.f.~\cref{tr:ns})}.
Let $P^{NC}$ denote the set of analyses that are not co-scheduled with their coupled simulation. 
Based on~\cref{tr:da}, 
we have to find a resource allocation for the co-scheduling mapping where analyses in $P^{NC}$ are co-scheduled on analysis-only co-scheduling allocations.
Let assume that the analysis in $P^{NC}$ are distributed among $L$ analysis-only co-scheduling allocations, in which each allocation is a no-intersecting subset of analyses denoted by $P_i^{NC}$. 
Given $X$ is a set of jobs, which are simulations or analyses, for the ease of presentation, we define the following notations:
%
\begin{itemize}
    \item $T(X) = \max_{x \in X} \; t(x)$ \rev{is the time to execute one iteration of $X$ concurrently;}
    \item $Q(X) = \sum_{x \in X} \; t_{x}(1)$ \rev{is the time to execute sequentially one iteration of $X$, i.e. executing on single core;} 
    \item $U(X) = \sum_{x \in X} \; c_{x} V_{x}$ \rev{is the cost of processing data remotely for $X$.}
\end{itemize}

\begin{theorem}
\label{tr:ns}
To minimize makespan, number of nodes and cores assigned to each analysis in analysis-only co-scheduling allocation $\langle P_i^{NC} \rangle$ are as follows:
\begin{align}
& n^{*}_{\langle P_i^{NC} \rangle} = \frac{B Q(P_i^{NC}) + U(P_i^{NC}) }{ B Q(S \cup A) + U(P^{NC})} N \label{eq:n_ns}\\
& c^{*}_{A_k} = \frac{ B Q(A_{k})}{B Q(P_i^{NC}) + U(P_i^{NC}) - C V_{A_{k}}} C  \; , \forall A_k \in P_i^{NC} \label{eq:c_ak} 
\end{align}
\end{theorem}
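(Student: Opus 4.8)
The plan is to build on the equalization principle stated just before the theorem: once resources may be fractional, an optimal allocation must make \emph{every} co-scheduling allocation execute one iteration in a common time $T^*$, and, inside each allocation, all co-scheduled jobs must execute one iteration simultaneously. If some allocation were slower, one could move a fraction of its neighbors' nodes onto it; if some job inside an allocation were slower than a co-scheduled one, one could move a fraction of the faster job's cores onto it; either move strictly decreases the makespan of \cref{eq:makespan}. Granting this, the proof is a two-level computation: first I would determine the shape of the solution inside a single analysis-only allocation as a function of its (unknown) node count, and then pin down $T^*$ — hence every node count — from the global budget $\sum n = N$.

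\textbf{Step 1 (inside $\langle P_i^{NC} \rangle$).} Write $n_i := n^*_{\langle P_i^{NC} \rangle}$. Since all analyses of $P_i^{NC}$ share these nodes, $n_{A_k} = n_i$ for every $A_k \in P_i^{NC}$, and since none is co-scheduled with its coupled simulation, \cref{eq:ta} gives $t(A_k) = \frac{t_{A_k}(1)}{n_i c_{A_k}} + \frac{V_{A_k}}{B n_i}$. Imposing $t(A_k) = T^*$ and solving for $c_{A_k}$ yields $c_{A_k} = \frac{B\, t_{A_k}(1)}{B n_i T^* - V_{A_k}}$, i.e. $B n_i T^* c_{A_k} = B\, t_{A_k}(1) + c_{A_k} V_{A_k}$. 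Summing this identity over $A_k \in P_i^{NC}$, using the core-budget equality $\sum_{A_k \in P_i^{NC}} c_{A_k} = C$ (leaving cores idle is wasteful since $t(A_k)$ is decreasing in $c_{A_k}$), and recalling $Q(P_i^{NC}) = \sum_k t_{A_k}(1)$ and $U(P_i^{NC}) = \sum_k c_{A_k} V_{A_k}$, I get $B C n_i T^* = B Q(P_i^{NC}) + U(P_i^{NC})$. Substituting $B n_i T^* = \bigl(B Q(P_i^{NC}) + U(P_i^{NC})\bigr)/C$ back into the expression for $c_{A_k}$ and using $Q(A_k) = t_{A_k}(1)$ produces exactly \cref{eq:c_ak}.

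\textbf{Step 2 (global budget, hence $T^*$ and $n_i$).} The same argument applied to a simulation allocation $\langle \{S_i\}\cup m(S_i)\rangle$ — now with zero communication cost by Assumption~\ref{assum:comm} and the first branch of \cref{eq:ta} — gives $C n_{S_i} T^* = Q(\{S_i\}\cup m(S_i))$. Adding all allocation node counts and equating to $N$ yields $N C T^* = \sum_{S_i \in S} Q(\{S_i\}\cup m(S_i)) + \sum_{i=1}^{L} \frac{B Q(P_i^{NC}) + U(P_i^{NC})}{B}$. The first sum equals $Q\bigl((S\cup A)\setminus P^{NC}\bigr) = Q(S\cup A) - Q(P^{NC})$ (each simulation and each co-scheduled analysis appears exactly once, each analysis of $P^{NC}$ none), and since the $P_i^{NC}$ partition $P^{NC}$ the second sum equals $Q(P^{NC}) + U(P^{NC})/B$; hence $N C T^* = Q(S\cup A) + U(P^{NC})/B$, i.e. $T^* = \frac{B Q(S\cup A) + U(P^{NC})}{B N C}$. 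Plugging this into $n_i = \bigl(B Q(P_i^{NC}) + U(P_i^{NC})\bigr)/(B C T^*)$ from Step 1 gives \cref{eq:n_ns}.

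The hard part will be making the equalization principle rigorous rather than merely intuitive. Concretely, I would need to show that (i) for a fixed node count, $t(A_k)$ in \cref{eq:ta} is continuous, strictly decreasing and convex in $c_{A_k}$, so the min–max of the $t(A_k)$ over the simplex $\{\sum_k c_{A_k} = C,\ c_{A_k}>0\}$ is attained precisely when all the $t(A_k)$ coincide; (ii) the per-iteration time of an allocation, once optimized internally, is a constant divided by its node count, so the min–max over allocations subject to $\sum n = N$ is attained precisely when all of them coincide; and (iii) the resulting allocation is feasible, i.e. $T^*>0$ and every denominator $B Q(P_i^{NC}) + U(P_i^{NC}) - C V_{A_k}$ is positive — this last condition, equivalent to $B n_i T^* > V_{A_k}$, is the precise "enough resources" hypothesis implicitly required. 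Once these points are established, the algebra of Steps 1–2 is routine and, in particular, the appearance of $Q(S\cup A)$ (all jobs, not just those in $P^{NC}$) in the denominator of \cref{eq:n_ns} is exactly the footprint of the cross-allocation balancing in (ii).
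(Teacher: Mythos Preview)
Your proposal is correct and follows essentially the same two-level equalization argument as the paper: equalize iteration times within each allocation via the core budget, then equalize across allocations via the node budget. The only differences are organizational---the paper packages your Step~1 computation as \cref{tr:to} and splits the node-level argument into ``first find $\naa$, then distribute it among the $L$ analysis-only allocations'' rather than computing $T^*$ directly from $\sum n = N$---but the underlying algebra and the reliance on \cref{rm:tx} are identical, and your explicit discussion of the feasibility/convexity conditions needed to justify equalization is in fact more careful than what the paper provides.
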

\begin{proof}
The proof is given in~\cref{apdx:ns}.
\end{proof}
Now the remaining part is to find $U(P_i^{NC})$.
To prevent from underutilized resources, for each analysis-only co-scheduling allocation containing $P_i^{NC}$, we expect   
    $\displaystyle \sum_{A_k \in P_i^{NC}} c_{A_k} = C$.
By substituting~\cref{eq:c_ak} to this equality, we have:
\begin{align}
    \label{eq:p_func}
    \sum_{A_k \in P_i^{NC}} \frac{ Q(A_{k})}{B Q(P_i^{NC}) + U(P_i^{NC}) - C V_{A_{k}}} = \frac{1}{B}
\end{align}
Because the left-hand side of~\cref{eq:p_func} is strictly decreasing when $U(P_i^{NC})$ is increased,~\cref{eq:p_func} has only a real root $U^{*}(P_i^{NC})$. 
In the experiments, we use SymPy, a Python package for symbolic computing to solve it numerically.
Substituting $U^{*}(P_i^{NC})$ to~\cref{eq:n_ns,eq:c_ak}, resource assignment for the analysis-only co-scheduling allocations is determined.

\pp{Simulation-based Allocations}
In this second step, we find a resource allocation for the remaining co-scheduling allocations, in which every simulation is co-scheduled with a subset of analyses it couples with. 
Specifically, we distribute remaining resources of $N - \naa$ nodes, where $\naa = \sum_{i=1}^{L} n^{*}_{\langle P_i^{NC} \rangle}$ (the value of $n^{*}_{\langle P_i^{NC} \rangle}$ is computed from~\cref{tr:ns}), for each simulation $S_i \in S$ and each analysis $A_j \in A \setminus P^{NC}$. Recall that $m(S_i)$ is the set of analyses that are co-scheduled with $S_i$.
\begin{theorem}
\label{tr:sb}
To minimize makespan, number of nodes and cores assigned to each simulation and analysis co-scheduled on co-scheduling allocation $\langle S_i \cup m(S_i) \rangle$ are as follows:
\begin{align}
n^{*}_{S_i} &= \frac{\displaystyle Q(S_{i} \cup m(S_i))}{Q(S \cup A \setminus P^{NC})}(N - \naa) \label{eq:n_si}\\
c^{*}_{S_i} &= \frac{Q(S_i)}{Q(S_i \cup m(S_i))} C \label{eq:c_si}\\
c^{*}_{A_j} &= \frac{Q(A_j)}{Q(S_i \cup m(S_i))} C \; , \forall A_j \in m(S_i) \label{eq:c_aj}
\end{align}
\end{theorem}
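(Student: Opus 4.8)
The plan is to mirror the strategy already used for \cref{tr:ns} (analysis-only allocations), but now applied to the simulation-based allocations $\langle S_i \cup m(S_i) \rangle$ that partition the remaining $N - \naa$ nodes. The guiding principle, stated in the text preceding \cref{tr:ns}, is that the makespan is minimized exactly when all co-scheduling allocations have equal per-iteration execution time, since with rational resources one can always shift nodes from a faster allocation to a slower one. So the proof has two layers: (i) \emph{within} a single allocation $\langle S_i \cup m(S_i) \rangle$ running on a fixed number $n_{S_i}$ of nodes, choose the core splits $c_{S_i}, c_{A_j}$ so that the simulation and every co-scheduled analysis finish an iteration simultaneously and no cores are wasted (i.e.\ $c_{S_i} + \sum_{A_j \in m(S_i)} c_{A_j} = C$); (ii) \emph{across} allocations, choose the node counts $n_{S_i}$ so that all allocations have the same per-iteration time, subject to $\sum_i n_{S_i} = N - \naa$.

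For step~(i): all jobs in $\langle S_i \cup m(S_i)\rangle$ are co-scheduled, so by the first case of~\cref{eq:ta} (and~\cref{eq:ts}) every job $x$ in the allocation has $t(x) = t_x(1)/(n_{S_i} c_x)$ — no communication term, because co-scheduled analyses read from local memory (Assumption~\ref{assum:comm}). Setting $t(S_i) = t(A_j)$ for all $A_j \in m(S_i)$ forces $c_x$ to be proportional to $t_x(1)$; combining with the full-utilization constraint $\sum_{x} c_x = C$ and the notation $Q(S_i \cup m(S_i)) = \sum_{x \in S_i \cup m(S_i)} t_x(1)$ gives exactly~\cref{eq:c_si} and~\cref{eq:c_aj}. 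I would also note that this common value is $t_x(1)/(n_{S_i} c_x) = Q(S_i \cup m(S_i)) / (n_{S_i} C)$, which feeds into step~(ii). It remains to argue that balancing within the allocation is optimal: if some job finished later than the others, it is the bottleneck $T(\langle S_i \cup m(S_i)\rangle)$, and one could move an $\varepsilon$ fraction of a core from a non-bottleneck job to it, strictly decreasing the max — so at optimum all finish together, and full utilization is likewise forced since otherwise spare cores could be handed to the bottleneck.

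For step~(ii): using the per-allocation iteration time $T_i := Q(S_i \cup m(S_i))/(n_{S_i} C)$ derived above, equalizing $T_i$ across all simulation-based allocations means $n_{S_i} \propto Q(S_i \cup m(S_i))$. Imposing $\sum_{S_i \in S} n_{S_i} = N - \naa$ and observing that $\sum_{S_i \in S} Q(S_i \cup m(S_i)) = Q(S \cup A \setminus P^{NC})$ — since the allocations $\{S_i \cup m(S_i)\}_{S_i \in S}$ partition exactly the jobs $S \cup A \setminus P^{NC}$ — yields~\cref{eq:n_si}. The optimality of equalization across allocations uses the rational-resource exchange argument once more: if allocation $i$ had strictly larger $T_i$ than allocation $i'$, transfer $\varepsilon$ nodes from $i'$ to $i$; this strictly reduces $\max(T_i, T_{i'})$ while leaving all other allocations untouched, contradicting optimality. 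One should also confirm that the analysis-only allocations (whose node budget $\naa$ is already fixed by~\cref{tr:ns}) were tuned to the same common iteration time, so that the global makespan in~\cref{eq:makespan} is indeed this shared value and no cross-transfer between the two families of allocations can help; this consistency is the content of "all allocations finish approximately at the same time" in the surrounding discussion.

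The main obstacle I expect is the bookkeeping in step~(ii): one must be careful that the exchange argument is valid globally, i.e.\ that moving nodes between a simulation-based allocation and an analysis-only allocation cannot decrease the makespan given the already-fixed $\naa$. This requires checking that the two-step procedure (fix analysis-only resources first, then distribute the rest) actually reaches the global optimum rather than a constrained one — intuitively it does because~\cref{tr:ns} was itself derived by equalizing against the \emph{whole} ensemble (note the $Q(S \cup A)$ and $U(P^{NC})$ terms in~\cref{eq:n_ns}), so the target iteration time is consistent across both phases; making that argument precise, rather than the algebra of~\cref{eq:n_si}--\cref{eq:c_aj}, is the delicate part.
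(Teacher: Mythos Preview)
Your proposal is correct and follows essentially the same route as the paper: equalize per-iteration times within each allocation (yielding \cref{eq:c_si}, \cref{eq:c_aj}) and then across allocations (yielding \cref{eq:n_si}), exactly as in \cref{apdx:sb}. The only cosmetic difference is that the paper invokes \cref{tr:to} and \cref{rm:tx} as black boxes for the within-allocation optimum and the equalization principle, whereas you rederive the special case $P_i=\emptyset$ of \cref{tr:to} directly and spell out the exchange argument behind \cref{rm:tx}; your extra paragraph on consistency with the analysis-only phase is a nice sanity check that the paper leaves implicit.
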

\begin{proof}
The proof is given in~\cref{apdx:sb}.
\end{proof}

\subsubsection{Integer Allocation}
\label{subsec:integer}
We have a solution to \coalloc with rational numbers of resources, however a more practically solution must use integer number of cores and nodes. 
Therefore, we relax the rational solution to an integer solution by applying a resource-preserving rounding heuristic. 
There are two levels of rounding needed to be handled: node-level rounding and core-level rounding. 
The node-level one rounds the number of nodes assigned to each co-scheduling allocation, while the core-level rounds the number of cores per node assigned to each application co-scheduled within the same co-scheduling allocation. 
The objective is to keep the makespan as minimized as possible after rounding.

These rounding problems are all declared as \emph{sum preserving rounding} as the sum of nodes or cores per node are the same after rounding. 
Formally, let assume we would like to round $x_i$ to an integer value $x_i^I$ such that $\sum x_i^I  = \sum x_i = s$.
For every $x_i$, a rounding way is to determine $i(x_i) \in \{0,1\}$ such that $x_i^I = x_i + i(x_i)$.
Hence, $\sum i(x_i) = s - \sum \lfloor x_i \rfloor$.
The idea is to pick $s - \sum \lfloor x_i \rfloor$ numbers among $x_i$ and round them up while round the others down, so that the sum of them are preserved.

The remaining task is to determine the number of nodes or cores of which jobs to round up based on the aforementioned objective of minimizing makespan, or minimizing the difference in new $T$, i.e. time to execute one iteration, after rounding (see~\cref{eq:makespan}).
From~\cref{eq:ts,eq:ta}, since assigned resources (i.e. number of nodes, number of cores per node) are inversely proportional to time to execution one iteration on one core, we present a rounding heuristic based on single-core execution time.
Specifically, at core-level, among applications $x$ in a co-scheduling allocation, we prioritize to round up $c_x$ whose $t_x(1)$ is larger.
At node-level, among co-scheduling allocations $X$, we prioritize to round up $n_{\langle X \rangle}$ whose $Q(X)$ is larger.
The algorithm following this rounding heuristic is described in~\cref{alg:rounding}.
%
\begin{algorithm}
    \caption{Rounding algorithm at node-level}\label{alg:rounding}
    \DontPrintSemicolon
    \Input{$N,\naa, n^{*}_{S_i}, Q(S_i \cup m^{*}(S_i))$}
    \Output{$n^{I}_{S_i}$}
    $k \gets N - \naa$\;
    $dict \gets \{\}$\;
    \For{$S_i \in S$}{
        $k \gets k - \lfloor n^{*}_{S_i} \rfloor$\;
        $dict \gets dict + \{S_i : Q(S_i \cup m^{*}(S_i))\}$\;
    }
    $sorted\_dict \gets sortByValue (dict)$\;
    $j \gets 1$\;
    \For{$S_i \in sorted\_dict$}{
        \eIf{$j > k$} {
            $n^{I}_{S_i} \gets \lfloor n^{*}_{S_i} \rfloor$\;
        }{
            $n^{I}_{S_i} \gets \lceil n^{*}_{S_i} \rceil$\;
        }
        $j \gets j + 1$\;
    }
    \Return{$n^{I}_{S_i}$}
\end{algorithm}
\section{Evaluation}
\label{sec:evaluation}
In this section, we present an evaluation of our proposed co-scheduling strategies developed to solve~\cosched and show the effectiveness of computing resource allocation using the approach developed to solve~\coalloc.

\pp{Simulator}
We implement a simulator based on WRENCH~\cite{wrench} and SimGrid~\cite{casanova2014simgrid} to accurately emulate (thanks to I/Os and network models validated by SimGrid) the execution of simulations and \insitu analyses in a workflow ensemble (see~\cref{fig:simulator}).
To emulate behavior of simulations and analyses, every simulation and analysis step is divided into two fine-grained stages: computational stage to perform certain simulation or analysis task, and I/O stage to publish/subscribe data to/from the simulation/analysis, respectively. 
The executions of these fine-grained stages tightly depend on each other following a validated \insitu execution model~\cite{do2021metric} to reflect coupling behavior between the simulation and \insitu analyses.
Task dependencies serve as an input for simulator workloads to simulate coupling behavior between simulations and \insitu analyses. 
Since a job in WRENCH is simulated on a single node, we emulate multi-node jobs by replicating stages of the job on multiple nodes and extrapolating task dependencies such that the execution order is satisfied by \insitu model.
The simulator is open-source and available online\footnote{\url{https://github.com/Analytics4MD/insitu-ensemble-simulator}}.
\begin{figure}[!ht]
	\centering
	\includegraphics[width=\linewidth]{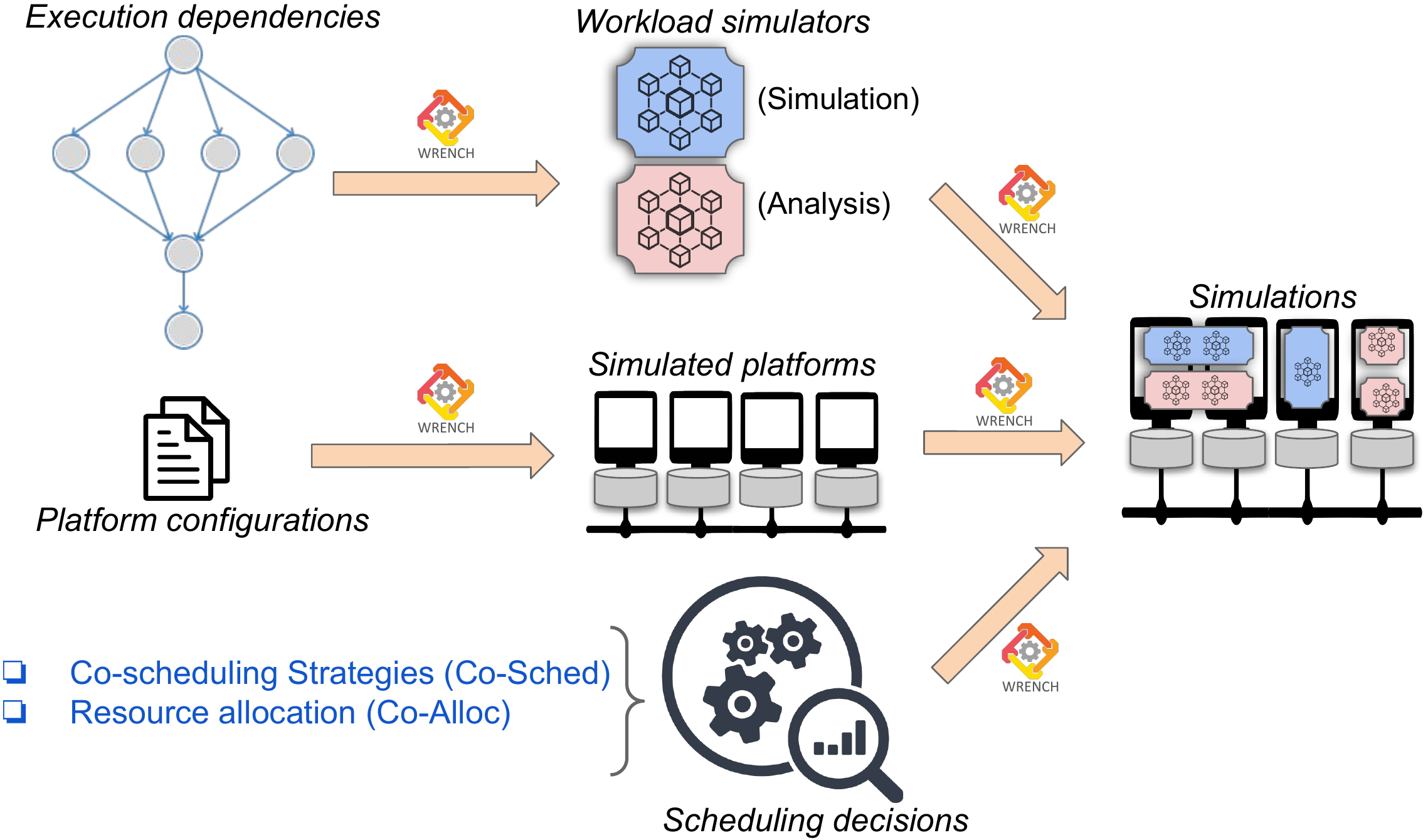}
	\caption{Simulator}
	\label{fig:simulator}
\end{figure}

\subsection{Experimental Setup}

\pp{Setup}
The simulator takes the following inputs: (1)~a workflow ensemble's configurations (i.e., number of simulations, number of coupled analyses per simulation, coupling relations), (2)~application profiles (i.e., simulations and analyses' sequential execution times), (3) resources constraints (i.e., number of nodes, number of cores per node, processor speed, network bandwidth), (3)~a co-scheduling mapping (i.e., the simulation each analysis is co-scheduled with), and (4)~a resource allocation (i.e., number of nodes and cores each application is assigned). 
We vary the workflow ensemble's configurations to evaluate the importance and the impact of different configurations on the efficiency of our proposed scheduling choices.
We fix the simulations' profiles, e.g. sequential execution time of the simulations, to study the impact of analyses on scheduling decisions. Sequential execution time of the analyses is generated randomly in a specific range which is relative to the sequential execution time of their coupled simulations. In this experiment, we generate it from $50\%$ to $150\%$ of the simulation's sequential execution time.
Even though our solution is applicable to any data size generated by the simulations, for the sake of simplicity, the amount of data processed by each analysis is kept at a fixed amount among all simulations, even though that amount is varied at different workflow ensemble's configurations.


\pp{Scenarios}
We compare a variety of scenarios corresponding to different co-scheduling mappings (see~\cref{table:scenarios}).
\ideal is the co-scheduling mapping where all analyses are co-scheduled with their respective simulation. 
\transit represents the scenario where all analyses are co-scheduled together on a dedicated co-scheduling allocation. 
We also compose hybrid scenarios where some analyses are co-scheduled with the simulation they couple with while the others are co-scheduled on the co-scheduling allocation without the presence of any simulation. 
For \increasing{x} (resp. \decreasing{x}), we pick the largest $x\%$ of the analyses, which is sorted in ascending (resp. descending) order of sequential execution time, to not co-schedule with the simulation they couple with (i.e. to co-schedule them on the dedicated co-scheduling allocation).
In this evaluation, we consider different percentages of number of analyses, $25\%$, $50\%$, $75\%$. 
For each of these co-scheduling mappings, we compute the resource allocations for the ensemble using \coalloc, which is described in~\cref{sec:allocation}.
%
%
\subsection{Bandwidth Calibration}
In order for the simulator to accurately reflect the execution platform, we have to calibrate the bandwidth per node used in our model to improve the precision of our solution using \coalloc. In our model, the bandwidth model is simple and optimistic as it does not account for concurrent accesses (i.e., the bandwidth available to each application is the maximum bandwidth $B$).
The idea is to compare the makespan given by our performance model (i.e., ~\cref{tr:ns}) to the makespan given by our simulation as WRENCH/SimGrid offers pretty complex and accurate communication models.
In~\cref{fig:bandwidth}, the theoretical makespan estimated by our model using the maximum bandwidth $B$ exhibits a large difference from the makespan resulted by the simulator (up to 30 times). The simulated bandwidth is smaller than $B$ during the execution as it is shared among concurrent I/O operations. 
Therefore, we explore different bandwidth values (see~\cref{table:bandwidth}) and re-estimate the makespan using our model based on the resource allocation at maximum bandwidth.

\begin{minipage}[b]{0.36\linewidth}
    \centering
    {
    \scriptsize
    \setlength{\tabcolsep}{3pt}
    \begin{tabular}[b]{cc}
        \toprule
        Model & Bandwidth \\
        \midrule 
        Baseline & $B$  \\
		$B'_{1}$ & $B / |P^{NC}|$  \\
		$B'_{2}$ & $B / \naa $  \\
		$B'_{3}$ & $B / (\naa \times |P^{NC}|)$ \\
        \bottomrule
    \end{tabular}
    }
    \captionof{table}{Bandwidth models to calibrate the simulator. \naa is number of nodes assigned to analysis in $P^{NC}$.}
    \label{table:bandwidth}
\end{minipage}
\hfill
\begin{minipage}[b]{0.56\linewidth}
    \vspace{5pt}
    \centering
    \includegraphics[width=\textwidth]{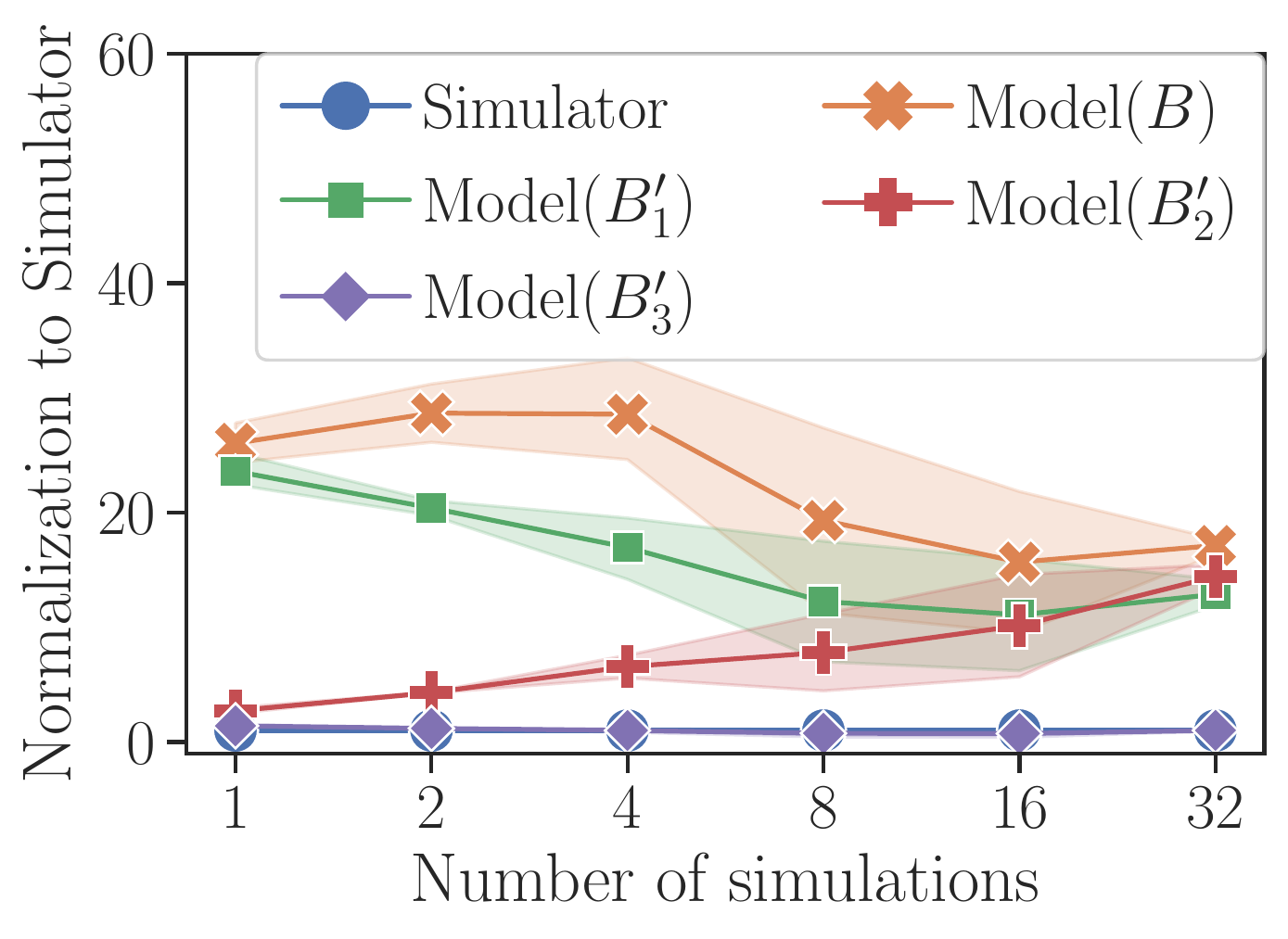}
    \captionof{figure}{Makespan estimated by the model for the \increasing{50} scenario with different values of bandwidth per node (see~\cref{table:bandwidth}). 
    }
    \label{fig:bandwidth}
\end{minipage}

~\cref{fig:bandwidth} confirms our hypothesis \rev{that bandwidth is shared among concurrent I/Os and the accuracy of \coalloc depends on choosing appropriate bandwidth used in the model}. $B'_{3}$ provides an approximation that is close to the makespan given by the simulator, we therefore choose $B'_{3}$ as the calibrated bandwidth for the  experiments hereafter.
%
%
\subsection{Results}
In this subsection, we explore the configuration space of the workflow ensemble. We aim to characterize the impact of different co-scheduling strategies by comparing the makespan of the simulator over various scenarios described in~\cref{table:scenarios}. 
To ensure the reliability of results, each value is averaged over 5 trials. 

\pp{Implications of \cosched}
%
%
%
\begin{table}[!t]
    \centering
\captionof{table}{Experimental scenarios, where $A(x\%)$ is the $x\%$ largest analyses' sequential time of $A$.}
{
    \centering
	\scriptsize
    \setlength{\tabcolsep}{4pt}
    \begin{tabular}{ccc}
    \toprule
    Scenarios & \shortstack{Analyses co-scheduled with\\their coupled simulation ($P^{C}$)} & \shortstack{Analyses not co-scheduled with\\their coupled simulation ($P^{NC}$)} \\
    \midrule 
	\ideal & A & \O \\
	\transit & \O & A \\
	\increasing{x} & $A \setminus A(x\%)$ & $A(x\%)$ \\
	\decreasing{x} & $A(100-x\%)$ & $A \setminus A(100-x\%)$ \\
    \bottomrule
    \end{tabular}
}
\label{table:scenarios}
\end{table}
\begin{figure}[!ht]
    \vspace{-15pt}
    \centering
    \begin{subfigure}[b]{\linewidth}
    	\centering
    	\includegraphics[width=\linewidth]{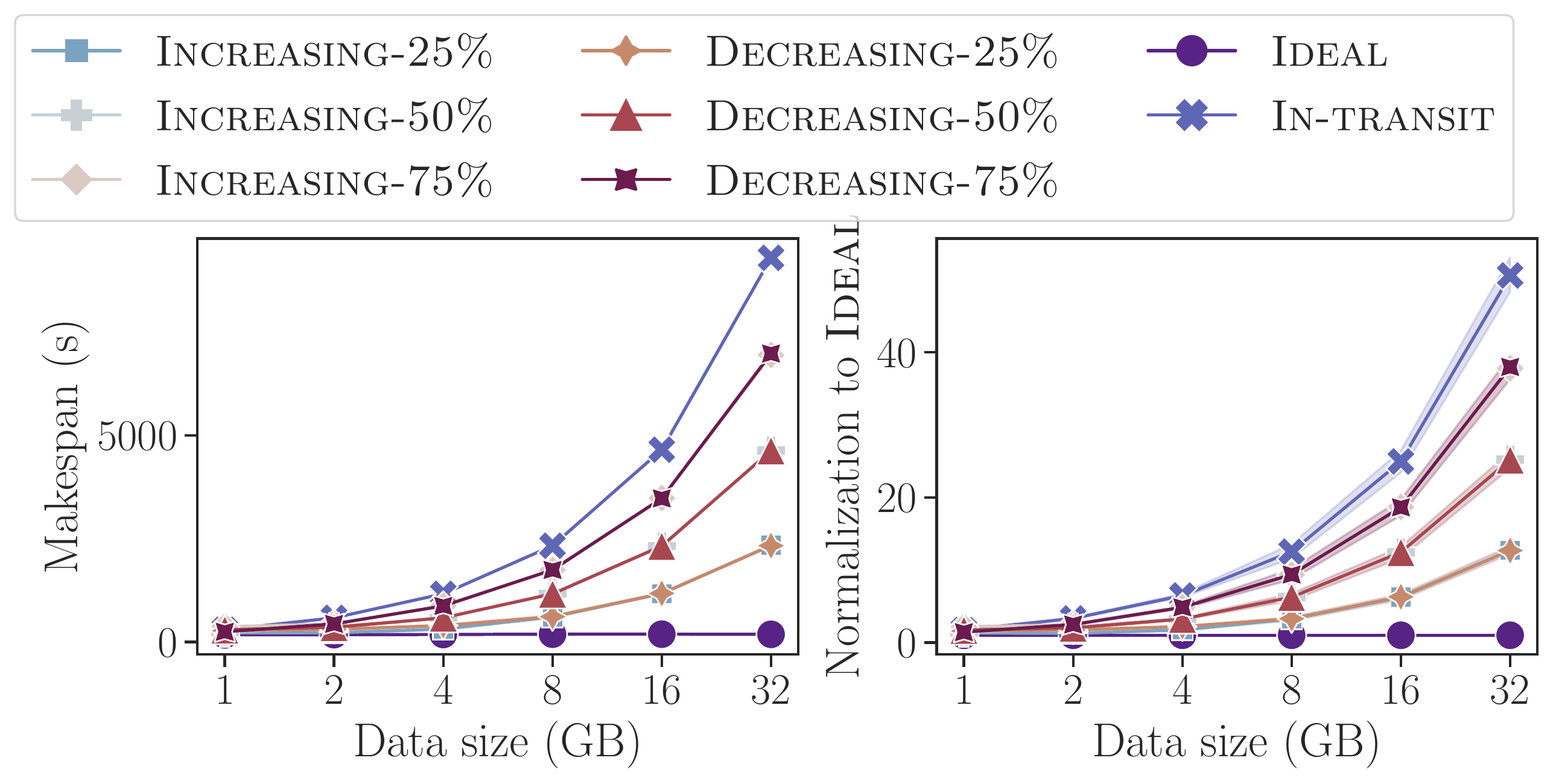}
    	\caption{Data size is varying. The workflow ensemble runs on $16$ nodes and has $4$ simulations, $4$ analyses per simulation.}
    	\label{fig:data}
    \end{subfigure}
    \begin{subfigure}[b]{\linewidth}
    	\centering
    	\includegraphics[width=\linewidth]{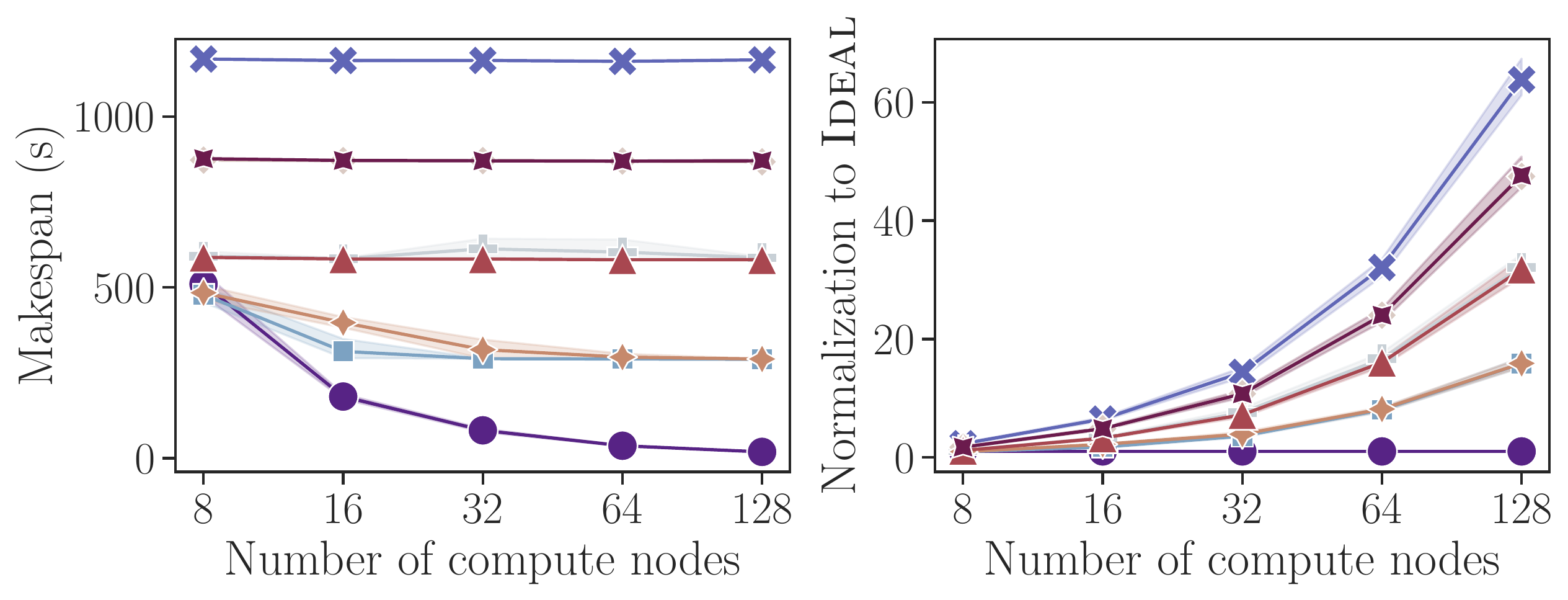}
    	\caption{Number of nodes is varying. The workflow ensemble has $4$ simulations, $4$ analyses per simulation. Each analysis processes $4GB$ of data each iteration}
    	\label{fig:node}
    \end{subfigure}
    \begin{subfigure}[b]{\linewidth}
    	\centering
    	\includegraphics[width=\linewidth]{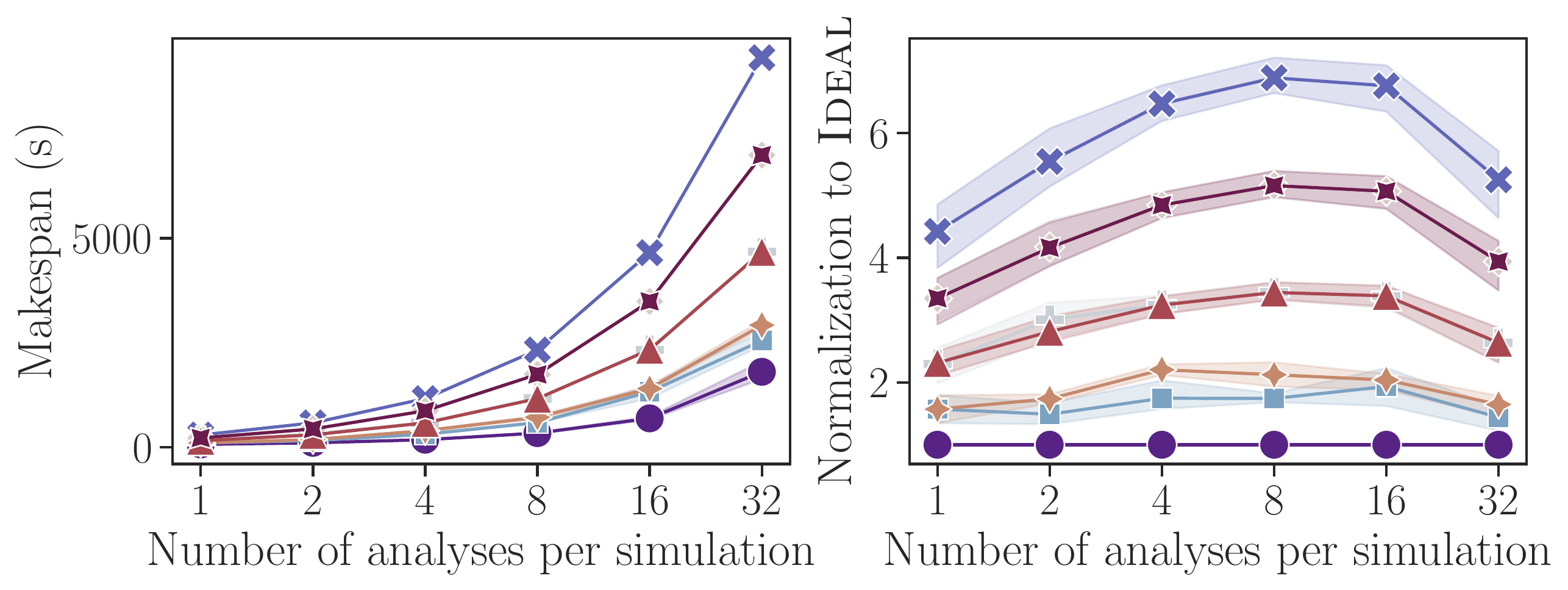}
    	\caption{Number of analyses per simulation is varying. The workflow ensemble runs on $16$ nodes and has $4$ simulations. Each analysis processes $4GB$ of data each iteration}
    	\label{fig:analysis}
    \end{subfigure}
    \begin{subfigure}[b]{\linewidth}
    	\centering
    	\includegraphics[width=\linewidth]{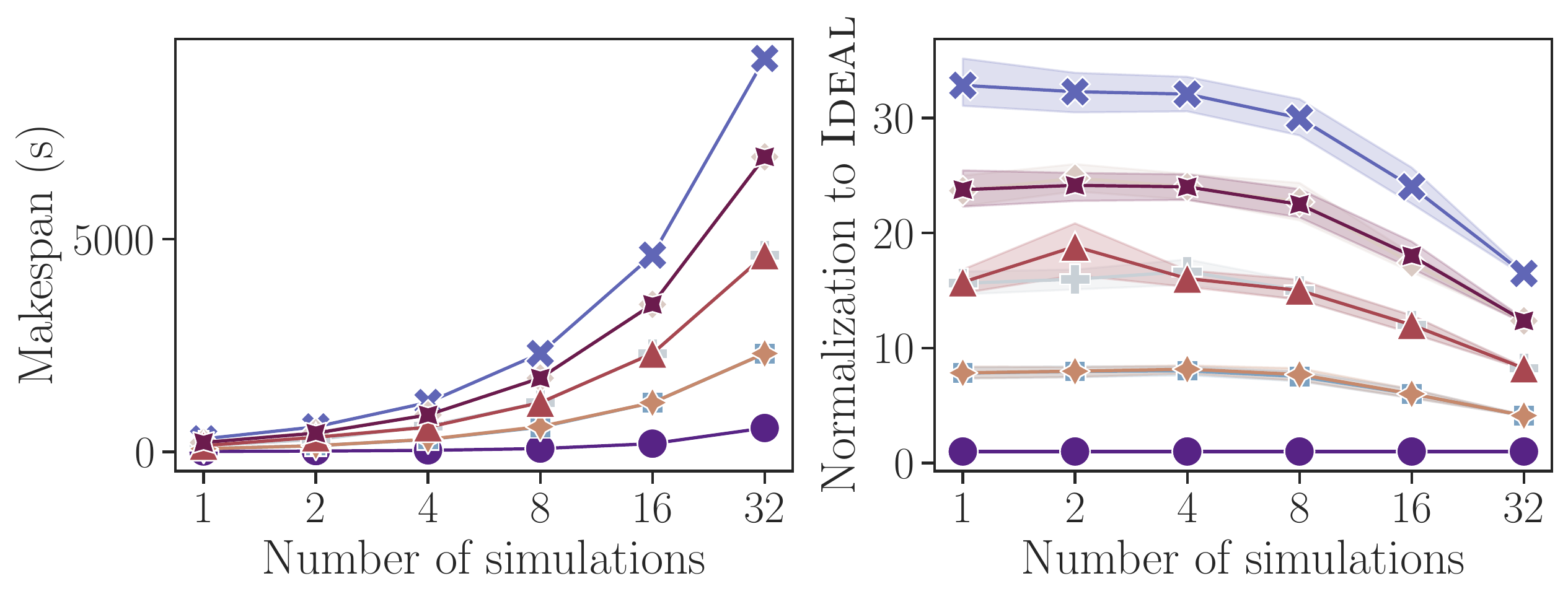}
    	\caption{Number of simulations is varying. The workflow ensemble runs on $64$ nodes and has $4$ analyses per simulation. Each analysis processes $4GB$ of data each iteration}
    	\label{fig:simulation}
    \end{subfigure}
    \caption{Makespan of the workflow ensemble when varying various configurations.}
    \label{fig:main}
    \vspace{-20pt}
\end{figure}
%
%
We vary one configuration of the workflow ensemble at a time while keeping the other configurations fixed. 
~\cref{fig:data} shows the makespan of the workflow ensemble running on $16$ nodes with $4$ simulations, $4$ analyses per simulation when varying the size of data processed by each analysis each iteration.
~\cref{fig:node} is the result with $4$ simulations, $4$ analyses per simulation and $4GB$ of data when varying the number of nodes.
In~\cref{fig:analysis}, the number of analyses per simulation is varied while keeping the number of simulation fixed at $4$ and $4GB$ of data, $4$ analyses per simulation.
In~\cref{fig:simulation}, the number of simulations is varied while keeping the number of analyses per simulation fixed at $4$, $4GB$ of data and $64$ nodes.

~\cref{fig:main} demonstrates the following order if the scenarios are sorted in ascending order by their makespan: \ideal, \textsc{x-25\%}, \textsc{x-50\%}, \textsc{x-75\%}, \transit, where \textsc{x} is either \textsc{Increasing} or \textsc{Decreasing} as those scenarios' makespan are approximately close to each other. Note that, for \ideal, $0\%$ of the analyses are not co-scheduled with their coupled simulation while that percentage is $100\%$ for \transit.
This observation aligns with our finding in~\cref{tr:gm} that not co-scheduling an analysis with its coupled simulation yields a higher makespan.
The more number of analyses not co-scheduled with their coupled simulation, the slower the makespan. 
Since \ideal outperforms the other scenarios, ideal co-scheduling mappings should be favored when the available resources can sustain.
In~\cref{fig:data,fig:analysis,fig:simulation}, the makespan scales linearly with the size of data processed by the analyses each iteration, the number of analyses per simulation and the number of simulations, respectively in which \ideal imposes the smallest escalation when these parameters are increased. 
In~\cref{fig:node}, only the makespan of the \ideal is decreased when the number of nodes is increased.
This is because the more nodes are utilized, the more communications are required to exchange the data, which requires the communication bandwidth is shared among them. The communication stages in each step are therefore slower even though the computational stages are faster due to more computing resources are assigned. Only the \ideal which incurs no remote communication can fully take advantage of the available resources. 

\pp{Efficiency of \coalloc}
%
\begin{figure}[!t]
	\centering
    \includegraphics[width=\linewidth]{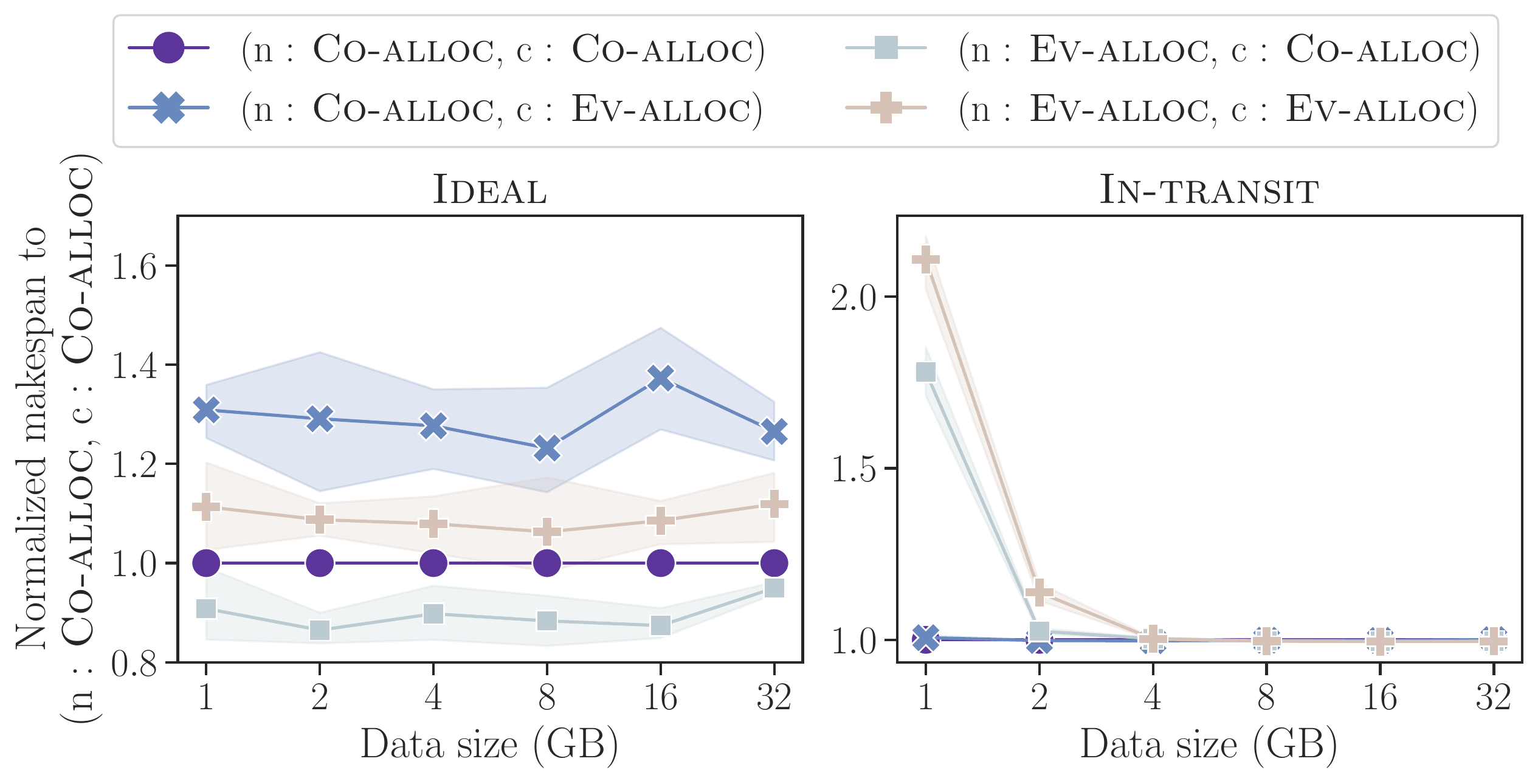}
	\caption{Comparison \rev{of normalized makespan} of \coalloc to \evalloc, where resources are evenly distributed. $(n:\coalloc, c:\evalloc)$ indicates \coalloc is applied at node-level while \evalloc is applied at core-level.}
	\label{fig:model}
\end{figure}
We compare the resource allocation computed by \coalloc with a naive approach of evenly distributing the resources, which is called as \evalloc. Specifically, at node-level, with \evalloc, the number of nodes is equally divided among co-scheduling allocations
while at core-level, for each co-scheduling allocation, the number of cores per node is equally divided among applications co-scheduled in such the allocation. 
Let $(n:x, c:y)$ is a manner to compute resource allocations, in which we apply $x$ method at node-level while using $y$ method at core-level. We study 4 possible combinations: $(n:\coalloc, c:\coalloc)$, $(n:\coalloc, c:\evalloc)$, $(n:\evalloc, c:\coalloc)$, and $(n:\evalloc, c:\evalloc)$.

Even though the experiment is conducted over all the scenarios in the configuration space, due to the lack of space, we selectively present a subset of scenarios where the allocations significantly differ between observing combinations. 
~\cref{fig:model} shows the makespan, which is normalized to $(n:\coalloc, c:\coalloc)$, in two representative extreme scenarios: \ideal and \transit when the amount of data processed each iteration is varied. 
In the \ideal scenario, $(n:\coalloc, c:\coalloc)$ surpasses the other combinations, except $(n:\evalloc, c:\coalloc)$.
This is because \ideal comprises no remote communication, thus execution time of each application's step is dominant by the portion of computational stages, which cause the performance is more sensitive to the changes at core-level than at node-level.
For \transit, using \evalloc at node-level results in a slower makespan (up to two times higher) than using \coalloc as off-node communications are primary in this scenario.
For larger data sizes, the I/O stages take the large proportion in each step in \transit, which makes the computing time negligible, thus there is no clear difference among discussed combinations.

\section{Conclusions}
\label{sec:conclusion}

In this paper, we have introduced an execution model of coupling behavior between \insitu jobs in a complex workflow ensemble. Based on the proposed model, we theoretically characterized computation and communication characteristics in each \insitu component. We further relied on this model to determine co-scheduling policies and resource profiles for each simulation and \insitu analysis such that the makespan of the workflow ensemble is minimized under given available computing resources.
By evaluating the scheduling solutions via simulation, we validate our proposed approach and confirm its correctness. We are also able to confirm the relevance of data locality as well as the need of well-management shared resources (e.g., bandwidth per node) in co-scheduling concurrent applications together.

For future work, we plan to extend the scheduling model to include contention and interference effects between \insitu jobs sharing the same underlying resources. 
Another promising research direction is to consider more complex use cases leveraging ensembles, e.g. adaptive sampling~\cite{vivek2020aebas} in which more coordination will be needed to achieve better performance as the simulations are periodically stopped and restarted.

\section*{Acknowledgments}
This work is funded by NSF contracts \#1741040, \#1741057, and \#1841758.


\appendix
Let $X$ is a set of jobs, which are simulations or analyses. We define the following:
%
\begin{itemize}
    \item $T(X) = \max_{x \in X} \; t(x)$ 
    \item $Q(X) = \sum_{x \in X} \; t_{x}(1)$ 
    \item $U(X) = \sum_{x \in X} \; c_{x} V_{x}$
\end{itemize}
\begin{remark}
\label{rm:tx}
To minimize $T(X) = \max_{x \in X} t(x)$ given that resources are finite, we allocate resources such that every $t(x)$, where $x \in X$, is equal to minimize difference among $t(x)$. 
\end{remark}
We use this remark to minimize execution time of concurrent applications. 
This remark is held when the number of resources (i.e. number of compute nodes, number of cores) are rational numbers.
We also note that all the following proofs are given under the context of rational allocation.
We discuss how to relax to integer allocations in~\cref{subsec:integer}.

Based on~\cref{rm:tx}, we compute minimal execution time in one iteration of a co-scheduling allocation, which is foundation for our following proofs.
\begin{lemma}
\label{tr:to}
Given a co-scheduling allocation $\langle S_i, m(S_i) \rangle$, and $P_{i}$ is the set of analyses that are co-scheduled with $S_i$, but not coupled with $S_i$.
The execution time for one iteration is minimized at:
\begin{equation}
    \label{eq:tca}
    T^{*}(S_i \cup m(S_i)) = \frac{\displaystyle B Q(S_i \cup m(S_i))  + U(P_i) }{ BC n_{S_i} } 
\end{equation}
\end{lemma}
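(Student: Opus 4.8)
The plan is to reduce the min–max over core splits inside the single co-scheduling allocation $\langle S_i \cup m(S_i)\rangle$ to a linear system via \cref{rm:tx}, and then eliminate the individual per-node core counts by one global summation. First I would record what the allocation looks like: all of $S_i$ and the analyses in $m(S_i)$ share the same $n_{S_i}$ nodes, so $n_x = n_{S_i}$ for every $x \in \{S_i\}\cup m(S_i)$; the analyses in $m(S_i)\setminus P_i$ are coupled with $S_i$ and read from local memory (first case of \cref{eq:ta}), while those in $P_i$ are co-scheduled with $S_i$ but coupled elsewhere, so they pay the remote term $V_{A_k}/(B\,n_{S_i})$ (second case of \cref{eq:ta}). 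The free variables are then the $c_x$, constrained by $\sum_{x} c_x \le C$.

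Next I would argue this constraint is tight at the optimum: if $\sum_x c_x < C$, hand the spare cores (in positive amounts) to every co-scheduled application; by \cref{eq:ts} and \cref{eq:ta} each computation term strictly decreases in $c_x$ while the communication term of \cref{eq:ta} is unaffected, so every $t(x)$ strictly decreases and hence so does $T(S_i \cup m(S_i)) = \max_x t(x)$, contradicting optimality. With $\sum_x c_x = C$ in hand, I invoke \cref{rm:tx}: at the optimum all $t(x)$ equal a common value $T^{*} := T^{*}(S_i\cup m(S_i))$. For $S_i$ and each $A_j \in m(S_i)\setminus P_i$ this gives $n_{S_i} c_x T^{*} = t_x(1)$; for each $A_k \in P_i$ it gives $T^{*} = t_{A_k}(1)/(n_{S_i} c_{A_k}) + V_{A_k}/(B\,n_{S_i})$, which after multiplying through by $n_{S_i} c_{A_k}$ becomes $n_{S_i} c_{A_k} T^{*} = t_{A_k}(1) + c_{A_k} V_{A_k}/B$.

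Finally I would sum all of these equations over $x \in \{S_i\}\cup m(S_i)$. The left side collapses to $n_{S_i} T^{*}\sum_x c_x = n_{S_i} C\, T^{*}$ using tightness, and the right side is $\sum_x t_x(1) + (1/B)\sum_{A_k\in P_i} c_{A_k} V_{A_k} = Q(S_i\cup m(S_i)) + U(P_i)/B$ by the definitions of $Q$ and $U$. Solving for $T^{*}$ yields exactly \cref{eq:tca}. The main obstacle — and the point that deserves a careful sentence — is that $U(P_i)$ itself still contains the unknowns $c_{A_k}$, so \cref{eq:tca} is an implicit characterization rather than a closed form; combined with the per-application equations above it nevertheless determines all the $c_x$ and $T^{*}$, and when $P_i = \emptyset$ it collapses to the clean $T^{*} = Q(S_i\cup m(S_i))/(C\,n_{S_i})$. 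I would also flag that, as in \cref{rm:tx}, this argument assumes rational core/node counts so that the perfectly balanced solution is attainable.
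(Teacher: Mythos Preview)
Your proof is correct and follows essentially the same approach as the paper: invoke \cref{rm:tx} to equalize all $t(x)$, use the full-core constraint $\sum_x c_x = C$, and then combine the per-application identities into a single summed relation to isolate $T^{*}$ (the paper phrases this combination as a mediant of equal ratios rather than a literal sum, but the algebra is identical). Your explicit tightness argument for $\sum_x c_x = C$ and your remark that $U(P_i)$ leaves \cref{eq:tca} implicit are welcome additions that the paper glosses over.
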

\begin{proof}[Proof of~\cref{tr:to}]
For every analysis $A_j \in m(S_i) \setminus P_i$ (set of analyses co-scheduled with $S_i$ and coupled with $S_i$) and $A_k \in P_i$ (set of analyses co-scheduled with $S_i$, but not coupled with $S_i$), $T(S_i \cup m(S_i))$ is minimized at $T^{*}(S_i \cup m(S_i))$ when:
\begin{align}
    T^{*}(S_i \cup m(S_i)) = t(S_i) = t(A_j) = t(A_k)
\end{align}
Since $A_k \in P_i$ is not coupled with $S_i$, from~\cref{eq:ts,eq:ta}, we have:
\begin{align}
    \label{eq:to}
    &\frac{t_{S_i}(1)}{n_{S_i}c_{S_i}} = \frac{t_{A_{j}}(1)}{n_{A_{j}} c_{A_{j}}} = \frac{t_{A_k}(1)}{n_{A_k}c_{A_k}} + \frac{V_{A_k}}{Bn_{A_k}} \nonumber \\
    &= \frac{ B Q(S_i \cup m(S_i)) + U(P_i) }{\displaystyle B (n_{S_i}c_{S_i} + \sum_{A_j \in m(S_i)}n_{A_j}c_{A_j})} 
\end{align}
We note that 
    $n_{S_i} = n_{A_j}$, 
as they are co-scheduled on the same co-scheduling allocation. 
To avoid resources are underutilized, we expect the entire cores of every allocation are used, or:
\begin{equation}
    \label{eq:cop}
    c_{S_i} + \sum_{A_j \in m(S_i)}c_{A_j} = C
\end{equation}
Substituting~\cref{eq:cop} to~\cref{eq:to}, we obtain~\cref{eq:tca}, which is what we need to prove.
\end{proof}

\subsection{Proof of~\cref{tr:gm}}
\label{apdx:gm}
We compare the minimized makespan of two following co-scheduling mappings: (1) the original co-scheduling mapping $m$ 
and (2) the co-scheduling mapping $m'$ forming from $m$ by not co-scheduling an analysis $A_k$ with its coupled simulation. 
We are going to prove that the makespan of $m'$ is greater than one of $m$, which implies not co-scheduling an analysis 
with the its coupled simulation causes the makespan slower. We therefore, indirectly prove the theorem.  

Since $A_k$ is co-scheduled with its coupled simulation in the co-scheduling mapping $m$, from~\cref{rm:tx,eq:makespan}, $T(S \cup A)$ of the co-scheduling mapping $m$, is minimized when:
\begin{align}
    \label{eq:tak}
    T(S \cup A) = t(A_k) = \frac{t_{A_k}(1)}{n_{A_k}c_{A_k}} 
\end{align}
On the other hand, in the co-scheduling mapping $m'$, $A_k$ is not co-scheduled with its coupled simulation, then $T'(S \cup A)$ is minimized when: 
\begin{align}
    \label{eq:t'ak}
    T'(S \cup A) = t'(A_k) = \frac{t_{A_k}(1)}{n'_{A_k}c'_{A_k}} + \frac{V_{A_k}}{Bn'_{A_k}}
\end{align}
\pp{Proof by Contradiction}
Let assume $T(S \cup A) \geq T'(S \cup A)$, then $t(A_k) \geq t'(A_k)$. From~\cref{eq:tak,eq:t'ak}, we have:
\begin{align}
    \frac{t_{A_k}(1)}{n_{A_k}c_{A_k}} &\geq \frac{t_{A_k}(1)}{n'_{A_k}c'_{A_k}} + \frac{V_{A_k}}{Bn'_{A_k}} > \frac{t_{A_k}(1)}{n'_{A_k}c'_{A_k}} \nonumber \\
    &n_{A_k}c_{A_k} < n'_{A_k}c'_{A_k} \label{eq:r}
\end{align}
Let $r(x) = n_x c_x$ denote the amount of resources in terms of total cores assigned to $x$.
~\cref{eq:r} indicates more resources assigned to $A_k$ in $m'$ than $m$. 
However, as the total number of cores is limited to $N \times C$, in the co-scheduling $m'$, there must exist an application $I$ which is either a simulation in $S$ or an analysis in $A \setminus A_k$ such that it is assigned fewer resources than it is in the co-scheduling $m$. 
Hence, $T'(I) > T(I)$, which makes $T(S \cup A) < T'(S \cup A)$. This is contradicted to our assumption. 
Hence, from~\cref{eq:makespan}, the makespan of co-scheduling $m$ is smaller than the makespan of $m'$.

\subsection{Proof of~\cref{tr:da}}
\label{apdx:da}
We compare the minimized makespan of two following co-scheduling mappings: (1) the co-scheduling mapping $m^*$ where all analyses not co-scheduled with their coupled simulation, denoted by $P^{NC}$, are co-scheduled on analysis-only co-scheduling allocations and (2) 
the co-scheduling mapping $m'$ in which not all analyses in $P^{NC}$ are co-scheduled on analysis-only co-scheduling allocations. 
Hence, in this mapping $m'$, $P^{NC}$ is able to partition into two parts, $P'^{NC}$ is the set of analyses that are co-scheduled in analysis-only co-scheduling allocations, and $P'^{C}$ is the set of analyses that are co-scheduled on the co-scheduling allocations that has the simulations they do not couple with. 
Then $P'^{C} \cup P'^{NC} = P^{NC}$. 
We also denote the set of analyses co-scheduled with their coupled simulation in these co-scheduling mappings by $P^{C}$, thus $P^C \cup P^{NC} = A$. 
We are going to prove the makespan of $m'$ is greater than the makespan of $m^*$.

\pp{Co-scheduling mapping $m^*$}
For the mapping $m^{*}$, for every co-scheduling allocation $\langle S_i \cup m^*(S_i) \rangle$, from~\cref{tr:to}, $T(S_i \cup m^*(S_i))$ is minimized at:
\begin{align}
T^*(S_i \cup m^*(S_i)) = \frac{B Q(S_i \cup m^*(S_i))}{BCn^*_{S_i}}
\end{align}
Based on~\cref{rm:tx}, $T(S \cup P^C)$ is minimized when every $T^*(S_i \cup m^*(S_i))$ is equal, or
\begin{align}
\label{eq:t_spc}
T^*(S \cup P^C) = \frac{\displaystyle B \sum_{S_i \in S} Q(S_i \cup m^*(S_i))}{\displaystyle BC \sum_{S_i \in S} n^*_{S_i}} = \frac{BQ(S \cup P^C)}{\displaystyle BC \sum_{S_i \in S} n^*_{S_i}}
\end{align}
Following the same procedure for analysis-only co-scheduling allocations in $m^*$, from~\cref{tr:to}, $T(P^{NC})$ is minimized at:
\begin{align}
\label{eq:t_pnc}
T^*(P^{NC}) = \frac{ BQ(P^{NC}) + U^{*}(P^{NC}) }{ BC \naa^*}, 
\end{align}
where $\naa^*$ is total number of nodes assigned to analysis-only allocations.

\noindent
Now combining simulation-present co-scheduling allocations and analysis-only co-scheduling allocations, $T(S \cup A)$ is minimized, denoted by $t^*$, when $t^* = T^{*}(S \cup P^C) = T^{*}(P^{NC})$. From~\cref{eq:t_spc,eq:t_pnc}:
\begin{align}
    &t^* = \frac{ BQ(S \cup P^C) }{\displaystyle BC \sum_{S_i \in S}n^{*}_{S_i}} = \frac{ BQ(P^{NC}) + U^{*}(P^{NC}) }{ BC \naa^*} \nonumber\\
    &t^* = \frac{BQ(S \cup A) + U^{*}(P^{NC})}{NBC} \label{eq:t*} \\
    &U^*(P^{NC}) = \frac{BQ(S \cup P^C)}{\displaystyle \sum_{S_i \in S}n^{*}_{S_i}} N - BQ(S \cup A) \label{eq:u*}
\end{align}

\pp{Co-scheduling mapping $m'$}
Similarly, with the mapping $m'$, $T(S \cup A)$ is minimized, denoted by $t'$ when $t' = T'(S \cup P^C \cup P'^{C}) = T'(P'^{NC})$.
For the mapping $m'$, we use prime notations to differentiate from the mapping $m^*$.
\begin{align}
    &t' = \frac{\displaystyle B Q(S \cup P^C \cup P'^{C}) + U'(P'^{C}) }{\displaystyle BC \sum_{S_i \in S}n'_{S_i}} = \frac{\displaystyle BQ(P'^{NC}) + U'(P'^{NC}) }{ BC \naa' } \nonumber \\
    &t' = \frac{BQ(S \cup A) + U'(P^{NC})}{NBC} \label{eq:t'} \\
    &U'(P^{NC}) = \frac{\displaystyle BQ(S \cup P^C \cup P'^{C}) + U'(P'^{C}) }{\displaystyle \sum_{S_i \in S}n'_{S_i} } N - BQ(S \cup A) \label{eq:u'}
\end{align}

\pp{Compare $t'$ and $t^*$}
For the sake of simplicity, we denote:
\begin{itemize}
\item $n' = \sum_{S_i \in S}n'_{ S_i }$ and $n^* = \sum_{S_i \in S}n^{*}_{S_i}$ 
\item $U^* = U^*(P^{NC})$ and $U' = U'(P^{NC})$
\end{itemize} 
From~\cref{eq:t*,eq:t'}, we have:
\begin{equation}
    \label{eq:diff_t}
    t' - t^* = \frac{U'(P^{NC}) - U^*(P^{NC})}{NBC}
\end{equation}
Thus to compare $t^*$ and $t'$, we only need to compare $U^*(P^{NC})$ and $U'(P^{NC})$. 

\noindent
Moreover, from~\cref{eq:u*,eq:u'}: 
\begin{align}
    U' - U^* 
    = \frac{N}{n'}( \frac{BQ(S \cup P^C)}{n^*} (n^* -n')  +  U'(P'^{C}) + BQ(P'^{C}) ) \label{eq:u-u}
\end{align}
Again, from~\cref{tr:to}, we have:
\begin{equation}
\label{eq:t*bc}
t^* = \frac{BQ(S \cup P^C)}{BCn^*} \Rightarrow \frac{BQ(S \cup P^C)}{n^*} = t^*BC 
\end{equation}
Substituting~\cref{eq:t*bc} to~\cref{eq:u-u}:
\begin{equation}
U' - U^* = \frac{N}{n'}( t^*BC (n^* - n') + U'(P'^{C}) + BQ(P'^{C}) ) \label{eq:diff}
\end{equation}
Let us define the set of simulations that analyses in $P'^{C}$ co-scheduled with as $S'^{C}$. 
We note that $|S'^{C}| \leq |P'^{C}|$, then again from~\cref{tr:to}:
\begin{align}
    \label{eq:diff_p}
    & t' = T'(S'^C \cup P'^C) = \frac{ BQ(P'^{C}) + U'(P'^{C})}{\displaystyle BC \sum_{S_l \in S'^{C}} n_{S_l}} 
\end{align}
By substituting~\cref{eq:diff_t} to~\cref{eq:diff_p}:
\begin{align}
    \label{eq:helper}
    & BQ(P'^{C}) + U'(P'^{C}) = (t^*BC + \frac{U' - U^*}{N}) \sum_{S_l \in S'^{C}} n'_{S_l} 
\end{align}
By substituting~\cref{eq:helper} to~\cref{eq:diff}, we have:
\begin{align}
    \label{eq:diff_r}
    (U' - U^*)(\displaystyle n' - \sum_{S_l \in S'^{C}} n'_{S_l} ) = Nt^*BC(n^* + \sum_{S_l \in S'^{C}} n'_{S_l} - n')
\end{align}
Since the total number of compute nodes for co-scheduling $S$ and $P^C$ plus the number of nodes of co-scheduling allocations containing analyses in $P'^{C}$ is always greater than the total number of nodes for co-scheduling $S, P^C$ and $P'^{C}$, then 
\begin{align}
    n^*_{\langle S, P^C \rangle} + \sum_{S_l \in S'^{C}} n'_{\langle S_l, m'(S_l) \rangle} &> n'_{\langle S, P^C, P'^{C} \rangle} \nonumber \\
    n^* + \sum_{S_l \in S'^{C}} n'_{S_l} &> n' \label{eq:ine1} 
\end{align}
We also note that as $S'^C \subseteq S$ then: 
\begin{equation}
    \label{eq:ine2}
    n' = \displaystyle \sum_{S_i \in S}n'_{S_i} > \sum_{S_l \in S'^{C}} n'_{S_l}
\end{equation}
From~\cref{eq:diff_r,eq:ine1,eq:ine2}, we have $U' > U^*$, or $t' > t^*$.

\subsection{Proof of~\cref{tr:ns}}
\label{apdx:ns}
Let $\naa$ denote total number of nodes assigned to analysis-only co-scheduling allocations. Hence, $\naa = \sum_{i=1}^{L} n^*_{\langle P_i^{NC}\rangle}$, where $\cup_{i = 1}^{L} P_i^{NC} = P^{NC}$. We determine resource assignment to each analysis-only co-scheduling allocation at node- and core-level, respectively. 

\pp{Node-level Allocation}
To minimize makespan of entire ensemble (see~\cref{eq:makespan}), we need to minimize $T(S \cup A)$ by verifying that $T$ of analysis-only co-scheduling allocations is equal to $T$ of simulation-based co-scheduling allocations (according to~\cref{rm:tx}), or
    $T^{*}(S \cup P^{C}) = T^{*}(P^{NC})$.
By applying~\cref{tr:to}:
\begin{align}
    \frac{Q(S \cup P^C)}{C (N - \naa ) } 
    &= \frac{B Q(P^{NC}) + U(P^{NC})}{B C \naa} \nonumber \\
    \naa &= \frac{BQ(P^{NC}) + U(P^{NC})}{BQ(S \cup A) + U(P^{NC})} N \label{eq:tm_nc}
\end{align}
Among analysis-only co-scheduling allocations $\langle P_i^{NC} \rangle$, from~\cref{rm:tx}, to minimize $T(P^{NC})$, we verify $T^*(P^{NC}) = T^*(P_i^{NC}), \forall i \in \{ 1, \dots, L\}$. Similarly, applying~\cref{tr:to}:
\begin{align}
    \frac{B Q(P^{NC}) + U(P^{NC})}{B C \naa} 
    &= \frac{B Q(P_i^{NC}) + U(P_i^{NC})}{B C n^{*}_{\langle P_i^{NC} \rangle}} \label{eq:nc_nci}
\end{align}
Substituting~\cref{eq:tm_nc} to~\cref{eq:nc_nci}, the number of nodes assigned to each analysis-only co-scheduling allocation is computed as:
\begin{equation*}
    n^{*}_{\langle P_i^{NC} \rangle} = \frac{B Q(P_i^{NC}) + U(P_i^{NC}) }{ B Q(S \cup A) + U(P^{NC})} N 
\end{equation*}

\pp{Core-level Allocation}
Similarly, based on~\cref{rm:tx}, for each analysis-only co-scheduling allocation $\langle P_i^{NC} \rangle$, $T(P_i^{NC})$ is minimized when $T^*(P_i^{NC}) = T^{*}(A_k)$ for every $A_k \in P_i^{NC}$. From~\cref{eq:nc_nci,eq:ta}:
\begin{align}
\label{eq:h_nca}
\frac{B Q(P_i^{NC}) + U(P_i^{NC})}{B C n^*_{\langle P_i^{NC} \rangle}} &=  \frac{t_{A_{k}}(1)}{n^*_{A_k} c_{A_{k}} } + \frac{V_{A_{k}}}{B n^*_{A_k}} 
\end{align}
Note that $n^*_{\langle P_i^{NC} \rangle} = n^*_{A_k}$ as $A_k$ is co-scheduled on $\langle P_i^{NC} \rangle$. Hence, from~\cref{eq:h_nca}, the number of cores assigned to each analysis in analysis-only co-scheduling allocation expressed as:
\begin{equation*}
c^{*}_{A_k} = \frac{ B Q(A_{k})}{B Q(P_i^{NC}) + U(P_i^{NC}) - C V_{A_{k}}} C
\end{equation*}

\subsection{Proof of~\cref{tr:sb}}
\label{apdx:sb}
Similarly to the method used to prove~\cref{tr:ns}, we solve the number of nodes and cores per node assigned to each application in simulation-based co-scheduling allocations, respectively.

\pp{Node-level Allocation}
~\cref{rm:tx} shows that $T(S \cup P^C)$ is minimized when $T^{*}(S \cup P^C) = T^{*}(S_i \cup m(S_i))$ for every $S_i \in S$. From~\cref{tr:to}, the number of nodes assigned to each simulation-based co-scheduling allocation is computed $\langle S_i \cup m(S_i) \rangle$ as follows:
\begin{align}
    \frac{Q(S \cup P^C)}{C (N - \naa ) } &= \frac{Q(S_i \cup m(S_i))}{C n^{*}_{\langle S_i \cup m(S_i) \rangle}} \nonumber \\
    n^{*}_{\langle S_i \cup m(S_i) \rangle} &= \frac{\displaystyle Q(S_{i} \cup m(S_i))}{Q(S \cup P^C)}(N - \naa)
\end{align}

\pp{Core-level Allocation}
For each co-scheduling allocation, $T(S_i \cup m(S_i))$ is minimized when $T^{*}(S_i \cup m(S_i)) = T^{*}(S_i)$. Applying~\cref{tr:to}:
\begin{align}
    \frac{Q(S_i \cup m(S_i))}{C n^{*}_{\langle S_i \cup m(S_i) \rangle}} = \frac{Q(S_i)}{n^{*}_{S_i} c^{*}_{S_i}} \label{eq:sim_si}
\end{align}
As $S_i$ is co-scheduled in $\langle S_i \cup m(S_i) \rangle$, thus $n^{*}_{\langle S_i \cup m(S_i) \rangle} = n^{*}_{S_i}$. From~\cref{eq:sim_si}, the number of cores per node assigned to each simulation $S_i$ co-scheduled in $\langle S_i \cup m(S_i) \rangle$:
\begin{equation*}
c^{*}_{S_i} = \frac{Q(S_i)}{Q(S_i \cup m(S_i))} C
\end{equation*}
For every analysis $A_j \in m(S_i)$ is co-scheduled on co-scheduling allocation $\langle S_i \cup m(S_i) \rangle$, $T(S_i \cup m(S_i))$ is minimized when $T^{*}(S_i \cup m(S_i)) = T^{*}(A_j)$. From~\cref{tr:to}, the number of cores per node assigned to each analysis $A_j$ co-scheduled in $\langle S_i \cup m(S_i) \rangle$:
\begin{align}
&\frac{Q(S_i \cup m(S_i))}{C n^{*}_{\langle S_i \cup m(S_i) \rangle}} = \frac{Q(A_j)}{n^{*}_{A_j} c^{*}_{A_j}} \nonumber \\
&c^{*}_{A_j} = \frac{Q(A_j)}{Q(S_i \cup m(S_i))} C \nonumber
\end{align}

\bibliographystyle{IEEEtran}
\bibliography{short-reference}

\end{document}